\documentclass[11pt]{article}
\usepackage[utf8]{inputenc}

\usepackage{amsmath}
\usepackage{amsfonts}
\usepackage{amssymb}
\usepackage{graphicx}
\usepackage{subcaption}
\usepackage{color}
\usepackage{float}
\usepackage{geometry}
\usepackage{natbib}%
\setcounter{MaxMatrixCols}{30}
\usepackage{booktabs}
\geometry{left=2cm,right=2cm,top=2cm,bottom=3cm}
\usepackage{url}
\usepackage[colorlinks=true,linkcolor=blue,citecolor=blue]{hyperref}%
\numberwithin{figure}{section}
\usepackage{eurosym}

\providecommand{\U}[1]{\protect\rule{.1in}{.1in}}
\newtheorem{theorem}{Theorem}[section]

\newtheorem{definition}{Definition}[section]

\newenvironment{proof}[1][Proof]{\noindent\textbf{#1.} }{\ \rule{0.5em}{0.5em}}
\numberwithin{equation}{section}

\textwidth=6.5in
\textheight=9.1in
\oddsidemargin=0in
\evensidemargin=0in
\topmargin=-0.5in
\numberwithin{equation}{section}

\textwidth=6.5in
\textheight=9.1in
\oddsidemargin=0in
\evensidemargin=0in
\topmargin=-0.5in
\newcommand{\be}{\begin{equation}}
\newcommand{\ee}{\end{equation}}

\newcommand{\bq}{\begin{eqnarray}}
\newcommand{\eq}{\end{eqnarray}}

\setcitestyle{authoryear,round}
\usepackage{mathrsfs}
\usepackage{setspace}
\usepackage{esvect}

\providecommand{\JEL}[1]{\textbf{JEL:} #1}

\linespread{1.25}

\begin{document}



\title{Contract Structure and Risk Aversion in Longevity Risk Transfers\thanks{We extend our gratitude for the valuable feedback received during various academic events, including the 2023 and 2024 International Conference on Actuarial Science, Quantitative Finance and Risk Management, the Actuarial Research Conference 2023, the Eighteenth International Longevity Risk and Capital Markets Solutions Conference, and the Statistical Society of Canada 2024 Annual Meeting, as well as from the engaging seminars held at the University of New South Wales, York University, the University of Nebraska Lincoln, Chongqing University, the Southern University of Science and Technology, and Monash University. The usual disclaimer applies.}}

\author{David Landriault\thanks{Department of Statistics and Actuarial Science, University of
Waterloo, Waterloo, ON, N2L 3G1, Canada (
david.landriault@uwaterloo.ca)}
\and Bin Li\thanks{Department of Statistics and Actuarial Science, University of
Waterloo, Waterloo, ON, N2L 3G1, Canada (bin.li@uwaterloo.ca)}
\and Hong Li
\thanks{Department of Econometrics and Business Statistics, Monash University, Caulfield, Victoria, Australia \\(hong.li1@monash.edu)}
\and Yuanyuan Zhang
\thanks{Department of Statistics and Actuarial Science, University of
Waterloo, Waterloo, ON, N2L 3G1, Canada (y2765zha@uwaterloo.ca)}}


\date{}

 \maketitle
\begin{abstract}

This paper develops an economic framework for optimal longevity risk transfer between a buyer and a seller with different risk aversions. We compare static (long-dated, pre-committed) and dynamic (short-dated, rolled) longevity swaps in a Stackelberg game. We find that static contracts are preferred when the buyer is more risk averse, while dynamic contracts are preferred when the seller is more risk averse. For the capital-market setting, we extend the benchmark by introducing seller-side ambiguity about the mortality distribution and robust max-min valuation.
Even moderate ambiguity can eliminate the market for static swaps, while dynamic designs remain viable.  {We then extend the analysis to index-based swaps with basis risk: relative to indemnity swaps, optimal loadings are lower and gains are smaller for both parties, though the static–dynamic preference pattern is unchanged.}
\vskip15pt
\noindent\textbf{Keywords: }
Longevity risk transfer, optimal contract design, capital market, risk preference, ambiguity. \\
\JEL{G22, G14, G41}

\end{abstract}

\section{Introduction}

The global aging population has intensified the need for defined‑benefit pension plans, life insurers, and reinsurers to effectively manage and transfer their exposure to longevity risk -- the risk that individuals live longer than expected.  This has led to the emergence of longevity risk transfer as a crucial financial tool.   {This paper studies how differences in risk aversion of the two counterparties of longevity risk transfer shapes the equilibrium choice between static (long‑dated, pre‑committed) and dynamic (short‑dated, rolled) hedging in a Stackelberg framework.}

To date, most transactions occur in the reinsurance channel, where capacity is provided by globally diversified reinsurers and where contracts are typically \textit{long-term}. Although reinsurance has been the dominant outlet over the past two decades, it alone cannot absorb global longevity exposure, which far exceeds its capacity \citep{blake2017longevity}. Complementary participation by capital‑market investors is therefore crucial, both to add capacity for hedgers and to offer investors diversification benefits, since longevity risk -- like many insurance risks -- has low correlation with traditional assets \citep{kessler2021new}.

Despite these theoretical advantages, the capital market for long‑dated longevity risk has seen limited success. While markets for other insurance risks, such as catastrophe (CAT) bonds, have matured \citep{cummins2008cat}, long‑horizon longevity transactions remain rare. A notable example is the European Investment Bank’s 25‑year longevity bond, withdrawn in 2005 due to weak demand \citep{blake2006longevity}. By contrast, mortality CAT bonds (e.g., Swiss Re’s Vita Capital series) and more recent longevity sidecars \citep{bugler2021reinsurance} have been issued with shorter three‑to‑five‑year maturities.\footnote{See \url{https://www.artemis.bm/deal-directory/} for a comprehensive deal directory of catastrophe bonds and insurance-linked securities transactions.}

These market patterns are consistent with differences in effective risk aversion across channels. Existing literature links risk aversion to diversification capacity, familiarity with risk, and regulatory constraints \citep{khaw2021cognitive,laeven2009bank,guiso2008risk}. In this paper, we refer to the institute seeking to hedge their longevity risk exposure as the \emph{buyer}, and the institute who provides longevity protection as the \emph{seller}.
In the reinsurance market, buyers -- pension plans, small/medium life insurers, and buy-out firms -- are typically more risk‑averse than the sellers, who are global reinsurers. Reinsurers operate broader, diversified portfolios (life, P\&C, etc.), so longevity risk is a moderate share of their overall exposure. Buyers, by contrast, manage smaller annuitant pools with limited diversification, which raises their effective risk aversion. 
Furthermore, decision‑makers of pension plans and insurers may display aversion to downside outcomes that threaten institutional solvency or job security \citep{ge2021role}. Consequently, reinsurance sellers tend to be less risk‑averse than buyers and are willing to supply long‑dated, higher‑coverage contracts, which buyers accept at higher premiums.

By contrast, in capital‑market settings, sellers, typically investment banks and other institutional investors, are \emph{more} risk‑averse toward longevity exposure due to limited domain expertise and weaker hedging/diversification options than global reinsurers (who often act as buyers). The limited understanding of the risk, combined with the lack of liquidity in the market, makes it difficult for these investors to transfer, diversify, or hedge the assumed longevity risks than the buyers. 

Our analysis is based on a theoretical model under the {\textit{Stackelberg sequential pricing} framework (hereafter, Stackelberg game). In the first step (the buyer’s problem), the utility-maximizing buyer, holding a life-annuity portfolio, chooses an optimal hedge ratio in response to a given risk loading (price). In the second step (the seller’s problem), anticipating the buyer’s choice, the seller sets the risk loading that maximizes her utility.} The main analysis focuses on \emph{indemnity} longevity swaps \citep{dowd2006survivor}, in which the buyer pays fixed amounts and receives floating payments linked to realized survivor counts in her portfolio.  {Extensions to index‑based longevity swap, with which both the fixed and the floating payments are linked to mortality experience of a (broader) reference population, are discussed in Section~\ref{sec:indexbase}.

The Stackelberg formulation has three advantages for our question. First, it reflects the market roles: the seller designs/prices, the buyer accepts and hedges. Second, it yields an \emph{endogenous} demand for risk transfer from the buyer, determined by her risk aversion and the quoted loading. Third, it allows a joint examination of both parties’ {gain in their objective functions relative to no trade (referred to as ``objective gains'' in the sequel)}, recognizing that higher prices reduce the buyer’s demand and hence the seller’s feasible surplus.\footnote{{We note that this Stackelberg framework is a stylized benchmark rather than a literal description of prevailing market microstructure. In practice, longevity risk transfer is typically negotiated and intermediated, and liability-transfer transactions may involve competitive bidding. Our aim is therefore to isolate how contract horizon interacts with heterogeneous risk aversion within a tractable sequential-pricing benchmark.}} 

We apply the dynamic mean–variance (MV) criterion as a performance measure for both buyers and sellers. Since the parties involved in longevity transfers are institutions, the mean–variance criterion is a more appropriate proxy than standard von Neumann–Morgenstern expected utility, which is generally used for individual decision‑making.  {Furthermore, the MV criterion can be interpreted as a simple approximation of a general utility function based on the Arrow–Pratt approximation \citep[see, for example,][]{maccheroni2013alpha}.}\footnote{ {We acknowledge that comprehensively modeling a firm's risk aversion is challenging, as it can be influenced by numerous endogenous and exogenous factors. This paper focuses on analyzing the contract preferences of the two parties and their realized levels of risk aversion, while the determinants of each party's risk aversion lie beyond the scope of this paper.}} Given the time‑inconsistency issue with the dynamic mean–variance criterion, this paper also contributes to the literature on time‑inconsistent dynamic optimization. A simplistic approach is to ignore time inconsistency and optimize period‑by‑period, which is myopic. \citet{basak2010dynamic} instead conceptualize time inconsistency as a strategic game among an individual’s future selves and seek a subgame‑perfect Nash equilibrium -- an approach widely adopted in subsequent work \citep[see, among others,][]{bjork2014mean,bjork2017time,dai2021dynamic}. We adopt this equilibrium approach in our dynamic contracting analysis.

We compare two contract types: a \textit{static} swap that fixes both the hedge ratio and the fixed‑leg payments at inception, and a \textit{dynamic} swap that updates them each period using the latest information. Static swaps provide long‑horizon protection with known hedging costs similar to long‑term reinsurance and could provide higher coverage, whereas dynamic swaps resemble rolling short‑dated instruments (e.g., mortality CAT bonds or longevity sidecars) with more variable costs and potentially lower effective coverage. In numerical analyses, we find that the preferred structure depends on which side is more risk‑averse: \textit{A more risk-averse buyer favors the static contract, as it better aligns with her long-term hedging needs, despite the higher premiums. Conversely, a more risk-averse seller prefers selling the dynamic contract, even at a lower price, to avoid long-term commitments.}  {When index-based swaps are used instead of indemnity swaps, we find that the equilibrium risk loadings, as well as the {objective} gains for both parties, are lower than in the case of indemnity swaps. However, the relationship between risk aversions of the two counterparties and the preference of contract types remains unchanged.} These findings help rationalize observed market outcomes and guide contract design. In particular, while the development of the long-term capital market is admittedly
challenging, our model predictions suggest that dynamic contracts, such as mortality CAT bonds or longevity sidecars, represent a promising initial step toward developing an active longevity-linked capital market.

As a robustness check and model extension, we also consider the case where the more risk-averse seller faces model ambiguity about the mortality distribution of the hedged portfolio. Using the max--min expected-utility framework \citep{GS89}, the seller evaluates outcomes under a set of candidate survival distributions and chooses the loading against the worst-case prior. We find that dynamic-contract preference persists for more risk-averse sellers and that even moderate ambiguity can eliminate the market for static contracts, providing further support for the scarcity of long-term capital-market longevity transfers.

{In practice, longevity exposure moves through three closely related routes. (i) \emph{Liability transfer} (buy-out): an insurer (often a specialist buy-out firm) assumes the legal obligation and pays members directly; (ii) \emph{Bulk-annuity asset} (buy-in): the plan purchases a bulk annuity and is reimbursed for benefits while \emph{retaining} the legal liability; (iii) \emph{Longevity-only hedges}: only the longevity risk in the liability is transferred, most commonly via longevity swaps generate payments exclusively based on the mortality experience of a portfolio/population. Our problem belongs to (iii). Importantly, buy-out firms currently drive much of the market’s swap demand by acting as pass-throughs: after buy-ins/buy-outs, they typically lay off a large share of the longevity leg via longevity swaps with reinsurers. Although buy-outs package other risks of the liability, including investment and credit risks, alongside longevity risk, the longevity leg is frequently carved out and hedged with instruments matching our setup \citep[see, for example,][]{d2018risking}. Hence our contract-type comparison speaks directly to the design choice faced by buy-out firms and other hedgers.}

Finally, it is worth noting that, while static contracts have found
widespread use in both the literature and the real-world insurance industry,
dynamic contracts are becoming increasingly relevant in insurance research %
\citep[see, for example,][]{wong2017managing,wang2023time}.
Nonetheless, the full range of advantages and disadvantages of these
contract types for buyers and sellers remains largely unexplored.   {Motivated by this gap, the
goal of this paper is to study the preferences for static versus dynamic
contracts by analyzing the {mean-variance} objective of both parties and establishing their
connection to risk attitudes in the context of longevity risk transfer.} We
posit that the patterns observed in managing longevity risk -- specifically,
the preference for static contracts by less risk-averse sellers and for
dynamic contracts by more risk-averse sellers -- may apply to other economic
contracting scenarios as well.

Our work contributes to the literature on longevity‑linked markets and contract design \citep{blake2013new,biffis2016cost,chen2022collective,chen2022tail,borger2023economics,blake2023longevity,chen2025learning}, survivor derivatives and pricing \citep{dowd2006survivor,dawson2010survivor}, dynamic longevity hedging \citep{wong2017managing,li2018dynamic,wang2023time,chen2024coping}, and mortality modeling \citep{boonen2017modeling,li2019forecast,li2021forecasting}. It also relates to ambiguity and robust hedging of longevity risk \citep{cairns2013robust,li2017robust} and to a separate but related strand of literature on information asymmetry in longevity transactions \citep{biffis2010securitizing,biffis2013informed,biffis2014keeping,chen2023optimal}. Methodologically, we build on sequential leader–follower pricing and time-inconsistent dynamic optimization \citep{basak2010dynamic,bjork2010general,bjork2014mean,bjork2014theory,bjork2017time,dai2021dynamic}.

The remainder of the paper is organized as follows. Section~\ref{sec:model} introduces the notation and the Stackelberg framework. Section~\ref{sec:numerical_analysis} presents numerical analyses of contract preferences across risk‑aversion regimes. Section~\ref{sec:information_asymmetry} incorporates seller‑side {model ambiguity in the capital-market setting.} Section~\ref{sec:indexbase} extends the analysis to index‑based swaps. Section~\ref{sec:conclusion} concludes. Technical details and supporting results are gathered in the Appendix.

\section{Model Setting} \label{sec:model}

This section introduces the model and notation (Section~\ref{sec:notation}), defines the two contract types (Section~\ref{sec:contract}), and formulates the Stackelberg game (Section~\ref{S:Game}).

\subsection{Surplus Dynamics}

\label{sec:notation}

We commence by setting the groundwork for a contract arrangement concerning
an indemnity longevity swap, involving two participating parties: the seller
and the buyer. We assume, for ease of understanding, that the buyer
possesses a life annuity portfolio consisting of a single cohort of
policyholders aged $x$ at time $0$. The buyer makes annual unitary payments to the policyholders.
It is also assumed that
these policyholders are homogeneous, sharing identical future survival
probabilities. The buyer's objective is to either
eliminate or reduce this risk through longevity swaps\footnote{%
According to Artemis, longevity swaps and longevity risk reinsurance
represent the most common form of longevity risk transfer in the current
market. 
In terms of cash flow structure,
longevity swaps have zero net cost at issuance (discussed later in this section), whereas longevity
reinsurance typically require an upfront premium payment. The inclusion of
longevity reinsurance within our analysis can be achieved straightforwardly.
Source: \url{https://www.artemis.bm/library/what-is-longevity-risk-transfer/}%
.}.

Throughout this paper, we concentrate exclusively on longevity risk, setting
aside financial risk. We work within a filtered probability space 
$(\Omega,\mathcal{F},\mathbb{F},\mathbb{P})$ and a discrete-time framework with
annual decision dates $\boldsymbol{\tau}=\{0,1,\ldots,T\}$, where $T$ denotes the
maximum survival horizon of the policyholders. The filtration 
$\mathbb{F}=\{\mathcal{F}_t\}_{t\in\boldsymbol{\tau}}$ represents the information
available up to and including time $t$. All random variables and stochastic
processes relevant to our study are defined on this space and are
$\mathbb{F}$-adapted. Key notation used in our analysis includes:
\begin{itemize}
\item $\mathbb{P}$: The  {reference measure} for the expected count
of policyholders alive during contract formulation. $\mathbb{P}$ is used for pricing the longevity swaps.

\item $_{t}\mathrm{p}_{x}$: The  {real-world} $t$-year survival probability 
for the policyholders aged $x$ at time 0.

Our focus is on a single cohort aged $x$ at time 0, and thus survival
probabilities discussed in this paper pertain solely to this cohort. Throughout this paper, $_{t-s}\mathrm{p}_{x+s}$ represents the $(t-s)$-year
survival probability for policyholders aged $x+s$ at time $s$ for any $%
s,t\in \boldsymbol{\tau }$ such that $s+t\leq T$ and $s<t$.  {For an individual aged \(x+s\) at \(t=s\), $_{t-s}\hat{\mathrm{p}}_{x+s}$ denotes the estimated $(t-s)$-year
survival probability under \(\mathbb{P}\).} 

\item $l_{0}$: Initial count of policyholders aged $x$ at time 0 within the
annuity portfolio.

\item $l_{t}$: The random count of survived policyholders in year $t$. Note that $l_{t}$ is $\mathcal{F}_{t}$-measurable and remains unobserved before $t$. Given information up to any $s$, where $s<t$, we assume that $l_{t}$ follows a binomial distribution,  {$l_{t}|\mathcal{F}_{s}\sim B(l_{s},\,_{t-s}\mathrm{p}_{x+s})$.} 

\item $\hat{l}_{t}$: The expected count of policyholders alive in year $t$
under the reference measure $\mathbb{P}$. This expected value
takes a different form in static contracts and dynamic contracts (discussed in Section~\ref{sec:contract}),
depending on whether the information of the realized count of survived
policyholders is used.  {In our analysis, we particularly focus on the \emph{unconditional} expectation $\mathbb E^{\mathbb P}[l_t]$ (as seen from time $0$) and the \emph{conditional} one-step-ahead expectation $\mathbb E^{\mathbb P}[l_t\mid\mathcal F_{t-1}]$ (as seen from time $t-1$). 
When we later specify contract timing, these two expectations will correspond to two fixing conventions for the fixed leg. 
Formally, we adopt the temporary shorthand
\[
\hat l_t^{[0]}:=\mathbb E^{\mathbb P}[l_t], 
\qquad 
\hat l_t^{[t-1]}:=\mathbb E^{\mathbb P}[l_t\mid\mathcal F_{t-1}].
\]
For simplicity of notation, we will omit the superscripts $[0]$ and $[t-1]$ when no confusion arises.
}
\end{itemize}

 {We assume that, with complete information, the estimated survival probabilities under the reference measure coincide with the real-world survival probabilities: $_{t-s}\hat{\mathrm{p}}_{x+s}= {_{t-s}\mathrm{p}_{x+s}}$ for all $s$ and \(t\). However, in the presence of {model ambiguity} (discussed in Section~\ref{sec:information_asymmetry}), the real-world survival probabilities are unknown by the seller. In this case, the reference measure $\mathbb{P}$ serves as the most plausible proxy for the real-world measure.}

The surplus of the buyer and the seller at time $t$ is denoted by $B_t$ and $%
S_t$, respectively. We assume the buyer's initial surplus, $B_0$, is known.
It could represent, for example, the total premium income from annuities. Our analysis exclusively
focuses on longevity risk, assuming a financial market with a single,
risk-free asset and a constant interest rate $r$. The buyer's unhedged surplus evolves as
\begin{align}
B_t = B_{t-1}(1+r) - l_t,  \label{eq:X_nohedge}
\end{align}
i.e., the prior surplus accrues at rate $r$ and is reduced by annuity payments at time $t$. Longevity risk originates from the uncertainty in future survivor counts, $l_t$.

In the baseline analysis, we explore longevity risk transfer through indemnity longevity swaps -- agreements to exchange future cash flows based on the realized counts of surviving policyholders within the buyer's portfolio. With an indemnity longevity swap, the seller offers partial or complete protection against the
uncertainty in $l_{t}$ in exchange for a series of premiums.
{The longevity swap is a $T$-year contract, initiated at \emph{zero net cost} at \(t=0\), and cash-settled via periodic net payments.}\footnote{ {In practice, collateral/initial margin may apply, which induces $t=0$ costs. We abstract from these funding frictions in our analysis.}}
In year $t$, the seller pays a floating
amount $u_{t-1}l_{t}$ to the buyer and, in return, receives a fixed
amount $u_{t-1} {(1+\tilde{\eta}_t)}\hat{l}_{t}$, where $u_{t-1}\in [0,1]$ is the 
\textit{hedge ratio} at time $t$, $ {\tilde{\eta}_t}$ is the time-$t$ longevity risk loading, and
$\hat{l}_{t}$ is the expected count of policyholders alive in
year $t$ under the reference measure $\mathbb{P}$.\footnote{ {Formally, $\hat{l}_{t}$ in the fixed payments should be denoted by $\hat{l}_{t}^{[\tau]}$, where $\tau\in\{0,t-1\}$ represents the measurability of $\hat{l}_{t}$. Contract terms will be formally defined in Section~\ref{sec:contract}.}} 
In each year $t$, if $l_{t}> {(1+\tilde{\eta}_t)}\hat{l}_{t}$, i.e., the realized number of policyholders
alive is larger than the expected value plus the risk premium, the
buyer receives a net payment $u_{t-1}\bigl(l_{t}- {(1+\tilde{\eta}_t)}\hat{l}_{t}\bigr)$; 
if $l_{t}< {(1+\tilde{\eta}_t)}\hat{l}_{t}$, she makes a net
payment $u_{t-1}\bigl( {(1+\tilde{\eta}_t)}\hat{l}_{t}-l_{t}\bigr)$ to the seller. Extensions to index‑based longevity swap are discussed in Section~\ref{sec:indexbase}.

Incorporating the longevity swap modifies the buyer's hedged
surplus, $\{B_{t}(u_{t-1})\}_{t=1,\ldots,T}$, to
\begin{align}
B_{t} \;=\; B_{t-1}(1+r) - l_{t} + u_{t-1}\bigl(l_t- {(1+\tilde{\eta}_t)}\hat{l}_{t}\bigr).
\label{eq:X}
\end{align}
For the seller, assuming no preexisting assets or liabilities, the surplus evolves as
\begin{align}  \label{eq:Y}
S_{t} \;=\; S_{t-1}(1+r) - u_{t-1}\bigl(l_t- {(1+\tilde{\eta}_t)}\hat{l}_{t}\bigr).
\end{align}

\subsection{Static and Dynamic Contracts} \label{sec:contract}

We consider two types of longevity swap contracts, distinguished by how the hedge ratio and fixed payments are determined:
\begin{enumerate}
    \item \textbf{Static Contract}: The hedge ratio is \textit{constant}, $u_{t-1} \equiv u$, for all $t=1,\ldots,T$. In addition, $\hat{l}_{t}$ in the fixed payment for each $t$ is determined at the \textit{beginning of the contract} (time 0), i.e., for $t=1,\ldots,T$,
\begin{align}
\hat{l}_{t} {^{[0]}} = \mathbb{E}^{\mathbb{P}}\!\left[ l_{t}\right]={}_{t}\hat{\mathrm{p}}_{x}\, l_{0}. \label{l hat static}
\end{align}
    \item \textbf{Dynamic Contract}: The hedge ratio is allowed to \textit{change over time}. Specifically, $u_{t-1}$ is chosen at time $t-1$ and is $\mathcal{F}_{t-1}$-measurable. The fixed-leg forecast $\hat{l}_{t}$ is also updated based on $\mathcal{F}_{t-1}$ at each $t-1$: 
    \begin{align}
    \hat{l}_{t} {^{[t-1]}} = \mathbb{E}^{\mathbb{P}}\!\left[l_{t}\mid\mathcal{F}_{t-1}\right] = \hat{\mathrm{p}}_{x+t-1}\,l_{t-1}, \label{l hat dynamic}
    \end{align}
    where $\hat{\mathrm{p}}_{x+t-1}$ is the estimated one-year survival probability for a policyholder aged $x+t-1$ in year $t-1$, and $l_{t-1}$ is the observed number of policyholders alive at time $t-1$.
\end{enumerate}
 {For model simplicity, we assume the risk loading 
\begin{equation}
    \tilde{\eta}_t=(1+\delta_L)^{t}\eta \label{eq:loading}
\end{equation} 
for all $t$ and both contract types, where $\delta_L$ is a predetermined positive value that ensures the risk loading is larger in later years of the contract. Given the specification in Equation~\eqref{eq:loading}, selecting the $T$-dimensional risk loading $\tilde{\eta}_{t\in\{1,...,T\}}$ simplifies to choosing a single parameter $\eta$.}

It is worth noting that both contract types share the same maturity date, $T$%
. The terms `static' and `dynamic' are used to describe the structural
difference in the scope of risk coverage. The constant hedge ratio and predetermined fixed payments make
hedging costs of static contracts completely known from the outset. This provides long-term, \textit{%
high-risk coverage} throughout
the contract duration. 
 {In contrast, since the hedge ratio and fixed payment are periodically updated, dynamic contracts primarily hedge next year’s cash flow rather than the entire remaining horizon.} In this sense, like a
series of one-year forwards, dynamic contracts offer \textit{low-risk
coverage}. With these contract specifications in place, we next formulate the Stackelberg game that determines the equilibrium hedge ratios and risk loadings.

\subsection{Stackelberg {Sequential Pricing} Framework}
\label{S:Game}

To identify optimal contracts, we adopt a {Stackelberg sequential pricing
framework (hereafter, Stackelberg game)}, recognizing the seller as the price maker and the buyer as the
price taker. Specifically, the seller, acting as the {leader}, offers
a range of swap contracts to the buyer, each distinguished by a specific
longevity risk loading $\eta $. The buyer responds by
selecting an optimal hedge ratio -- either constant or time-varying
depending on the contract type -- for each given $\eta $. With knowledge of
the buyer's optimal choice of hedging ratio, the seller then determines the
optimal value of risk loading $\eta $ that best serves her own objective. {In this paper, we abstract our analysis from hidden types, incentive constraints, or screening.}
To facilitate comparison, we assume that the seller presents only one type of contract at a time.
Therefore, the optimization problem for static and dynamic contracts is
conducted through two separate Stackelberg game analyses.

Our choice to employ the Stackelberg game framework is based on several reasons. 
First, the Stackelberg game framework can be seen as a middle ground between
the classical monopoly and competitive models \citep{rothschild1976equilibrium, stiglitz1977monopoly}, which are two extremes. It is
widely acknowledged that real insurance markets lie between perfect competition and monopoly. Even though the seller is the leader of the game, it is clearly not optimal to set the risk loading
extremely high as the buyer's demand would diminish. Second, the buyer's
demand is \textit{endogenously} determined according to the seller's pricing strategy. Due to the lack of empirical data, it is generally difficult to accurately estimate the buyers' demand curve in response to changes in insurance prices. In the Stackelberg game, buyers' demand is well incorporated by solving for the buyer's optimal hedge ratio in the first step. 
Furthermore, unlike existing works, we focus on investigating the relationship between contract type preferences (static or dynamic) and the risk preferences of the involved parties.

In the Stackelberg game, we select the mean-variance (MV) criterion as the objective function for both players. This choice is based on the following reasons. First, stemming from Markowitz's modern portfolio theory, the MV criterion has been widely used as a common performance measure in asset management for firms, applicable to both single-period and multi-period problems \citep[see, for example,][]{ait2001variable,acharya2005asset,basak2010dynamic,bjork2014mean,dai2021dynamic}. Second, the MV criterion can be viewed as an approximation to the standard von Neumann–Morgenstern expected utility maximization, as per the Arrow–Pratt approximation of certainty equivalence \citep{maccheroni2013alpha}. This implies that the findings based on the MV criterion should remain similar when general utility functions are used. Third, the MV criterion simply reduces to the risk-neutral case when the risk aversion parameter becomes zero. 

Formally, for the \textit{static contract}, the buyer selects the optimal constant hedging ratio $u$ given a risk loading $\eta\geq 0$. This is achieved by solving the following optimization problem:
\begin{align}\label{insurer objective_static}
     \sup_{u}\left\{\mathbb{E}^{\mathbb{P}}\left[ B_{T}(u,\eta)\right] -\frac{\gamma_b}{2}\mathrm{Var}^{\mathbb{P}}\left[B_{T}(u,\eta)\right]\right\},
\end{align}
where $\gamma_b$ represents the buyer's degree of risk aversion, and $B_{T}(u,\eta)$ denotes the buyer's surplus at time $T$, given hedge ratio $u$ and risk loading $\eta$. The seller, considering the buyer's optimal response, $u^{\ast}(\eta)$, then selects the optimal risk loading $\eta$ to maximize the mean-variance criterion:
\begin{equation}
\sup_{\eta\geq 0}\left\{\mathbb{E}^{\mathbb{P}}\left[ S_{T}\bigl(u^{\ast}(\eta)\bigr)\right] -\frac{\gamma_s }{2}\mathrm{Var}^{%
\mathbb{P}}\left[S_{T}\bigl(u^{\ast}(\eta)\bigr)\right]\right\},
\label{MEU}
\end{equation}
where $S_{T}\bigl(u^{\ast}(\eta)\bigr)$ is the seller's surplus at time $T$ given hedge ratio $u^{\ast}(\eta)$ and $\gamma_s$ denotes the seller's degree of risk aversion.

In the context of a \textit{dynamic contract}, the buyer adjusts her strategy to maximize the mean-variance criterion with a \textit{time-varying} hedge ratio:
\begin{align}\label{insurer objective}
     \sup_{\textbf{u}}\left\{\mathbb{E}^{\mathbb{P}}\left[ B_{T}(\textbf{u},\eta)\right] -\frac{\gamma_b}{2}\mathrm{Var}^{\mathbb{P}}\left[B_{T}(\textbf{u},\eta)\right]\right\},
\end{align}
with $\textbf{u}=(u_t)_{t\in\{0,1,2,...,T-1\}}$. The seller's corresponding optimization problem is then defined as:
\begin{equation}
\sup_{\eta\geq 0}\left\{\mathbb{E}^{\mathbb{P}}\left[ S_{T}\bigl(\textbf{u}^{\ast}(\eta)\bigr)\right] -\frac{\gamma_s }{2}\mathrm{Var}^{%
\mathbb{P}}\left[S_{T}\bigl(\textbf{u}^{\ast}(\eta)\bigr)\right]\right\},
\label{MEU_dynamic}
\end{equation}
with $\textbf{u}^{\ast} = \bigl(u_t^{\ast}(\eta)\bigr)_{t\in\{0,1,2,...,T-1\}}$. 

We would like to make two remarks regarding the proposed Stackelberg game. First, throughout the formulation of the Stackelberg game, we assume that the buyer and the seller exhibit risk aversion $\gamma_{b} > 0$ and $\gamma_{s} \geq 0$, respectively. The buyer's risk aversion is crucial because it determines her willingness to engage in longevity risk transfer. If the buyer were not risk-averse, she would lack the motivation to transfer any longevity risk considering the risk loading imposed by the seller in addition to the expected loss. This additional cost is a deterrent unless mitigated by the benefits of risk transfer. In contrast, the seller could be either risk averse or risk neutral. Throughout the analysis, we consider a general situation of a risk-averse seller. As a specific scenario, we will discuss the results with a risk-neutral seller ($\gamma_{s} = 0$) in Appendix~\ref{sec:risk_neutral_seller}. Second, in this Stackelberg setup, {no trade is always feasible in this benchmark. The buyer can choose not to hedge, corresponding to $u_{t-1}=0$ for all $t$, and the seller can refrain from offering an attractive loading. Positive-trade outcomes are therefore interpreted relative to each party's own no-trade benchmark. Otherwise, the equilibrium outcome is simply no trade.}

Finally, note that time inconsistency arises only in the buyer's problem when a dynamic
contract is considered. In other words, there is no such issue for the
seller's problem or the buyer's problem under a static contract, since the
control variables in these cases remain constant over time. For the buyer's problem with dynamic contract, a more sophisticated planning approach is required. Specifically, the buyer shall treat the problem as a
game in which the \textquotedblleft players\textquotedblright\ are the buyer
future selves, each making decision sequentially over time. The optimal
strategy, referred to as a sub-game perfect equilibrium, is then determined
by solving for the hedge ratio backward in time. {Furthermore, because the dynamic mean-variance problem is time-inconsistent, the dynamic contract should be interpreted as a tractable benchmark under asymmetric commitment: the seller commits ex ante to the loading rule, whereas the buyer solves the dynamic hedging problem via the standard intrapersonal equilibrium approach. Recent work by \cite{bayraktar2025time} argues more generally that equilibrium notions in time-inconsistent environments require careful interpretation and that stronger consistency notions may be desirable in richer settings. Developing such consistency concepts in our sequential pricing game is an interesting future research direction.}

\section{Numerical Analysis}
\label{sec:numerical_analysis}

This section evaluates the implications of static versus dynamic contracts {to the mean-variance objectives of both parties}. The base scenario parameters are an interest rate ($r$) of 0.02, an initial age ($x$) of 60, an initial annuitant count ($l_0$) of 100,000, a contract duration ($T$) of 40 years, and $\delta_L=0.03$ in the risk loading formula \eqref{eq:loading}. Without loss of generality, the initial wealth of both the buyer ($B_0$) and the seller ($S_0$) is assumed to be zero.  We consider two cases in which the buyer or the seller is more risk-averse, respectively. Results with a risk-neutral seller ($\gamma_{s} = 0$) are discussed in Appendix~\ref{sec:risk_neutral_seller}. 

\subsection{Simulating Future Survival Probabilities} \label{sec:mortality}

We use the Age-Period-Cohort-Improvement (APCI) mortality model to simulate future survival probabilities. The APCI model is used by the CMI Mortality Projections Committee\footnote{Source: \cite{cmi2016cmi}.} to generate life tables for U.K. life insurers and pension plans. It models the annual death rates at age $x$ and year $h$, $m_{x,h}$, logarithmically:
\begin{align}
\ln(m_{x,h})=\beta_x^{(1)}+\beta_x^{(2)}(h-\bar{h})+ \kappa_h+\theta_c+\sigma_{\omega_x}\omega_{x,h}, \label{eq:apci}
\end{align}
where $\beta_x^{(1)}$ and $\beta_x^{(2)}$ represent the age and period effects, respectively, $\bar{h}$ is a normalization parameter, $\kappa_h$ is a time-varying mortality improvement factor, $\theta_c$ denotes the cohort effect, $\sigma_{\omega_x}$ is the standard deviation of error terms for age $x$, and $\omega_{x,h}$ are independent standard normal distributed error terms.

Forecasting future survival probabilities involves projecting the mortality improvement factor, $\kappa_h$. We model $\kappa_h$ as a random walk following an ARIMA(0,1,0) process:
\begin{align}
    \kappa_h =\kappa_{h-1}+\sigma_{\kappa}\varepsilon_h, \label{eq:kappa}
\end{align}
where $\sigma_{\kappa}$ is the standard deviation of the error terms, and $\varepsilon_h$ are independent standard normal distributed error terms.

The APCI model is calibrated to U.K. uni-sex death rates sourced from the Human Mortality Database\footnote{Source: \url{https://www.mortality.org/}.}, using one-age, one-year death rates for ages 20 to 100 and years 1956 to 2020. The model parameters in Equation \eqref{eq:apci} are estimated via maximum likelihood \citep{richards2019stochastic}. Subsequently, the ARIMA(0,1,0) model \eqref{eq:kappa} is fitted to the estimated mortality improvement parameters. After estimating the model, the estimated future survival probabilities under the  {reference measure} are obtained by standard simulation practice.  {After estimation, we forward-simulate $K$ paths of the mortality-improvement factor via \eqref{eq:kappa} and, for each path, map them through the APCI observation equation \eqref{eq:apci} to obtain simulated death rates $m_{x,h}$. Cohort survival probabilities then follow from the standard exponential link between integrated death rates and survival (e.g., $_{t-s}p_{x+s}\approx \exp\{-\sum_{\mu=0}^{t-s} m_{x+s+\mu,\,s+\mu}\}$), and Monte Carlo averages across the $K$ paths yield the forecasts $_{t-s}\hat p_{x+s}$ under the reference measure.
}

\subsection{{Objective Gains Relative to No Trade}}
\label{sec:reinsurance_market}

We first consider the scenario where the buyer is more risk-averse, with $(\gamma_b,\gamma_s)=(0.05,0.02)$. Figure~\ref{fig:HP_no_ambiguity} reports the {objective} gains for both parties under static and dynamic contracts. For any given $\eta$, a party’s {objective} gain equals the difference between its optimal expected utility from trading at the equilibrium hedge volume for that $\eta$ and its no-trade utility. Hence, a positive value indicates a beneficial transaction at that $\eta$.

From Figure~\ref{fig:HP_no_ambiguity}(a), several observations emerge from the seller’s perspective. First, the range of $\eta$ values yielding a positive {objective}  gain is markedly broader for the static contract than for the dynamic one, indicating that static contracts more readily support transactions. Second, the $\eta$ values consistent with a functioning market are higher under static contracts.  {The optimal $\eta^\ast$ that yields the maximal {objective}  gain for the seller is around 0.06 under the static contract, whereas this value is slightly under 0.02 under the dynamic one.} This reflects the larger longevity risk the seller assumes over the entire horizon. Third, the maximum {objective}  gain under static contracts is substantially higher than under dynamic contracts, because buyers place greater value on the more extensive and predictable risk coverage of static designs and thus accept higher $\eta$ for static than for dynamic contracts. Figure~\ref{fig:HP_no_ambiguity}(b) shows the buyer’s {objective}  gains. The bold dot on each line marks the gain at the seller’s optimal $\eta^\ast$. The equilibrium static contract delivers a considerably larger {objective}  gain for the buyer.

\begin{figure}[htbp!]%
    \centering
    \subfloat[Seller's {objective}  gain]{\includegraphics[width=7.5cm]{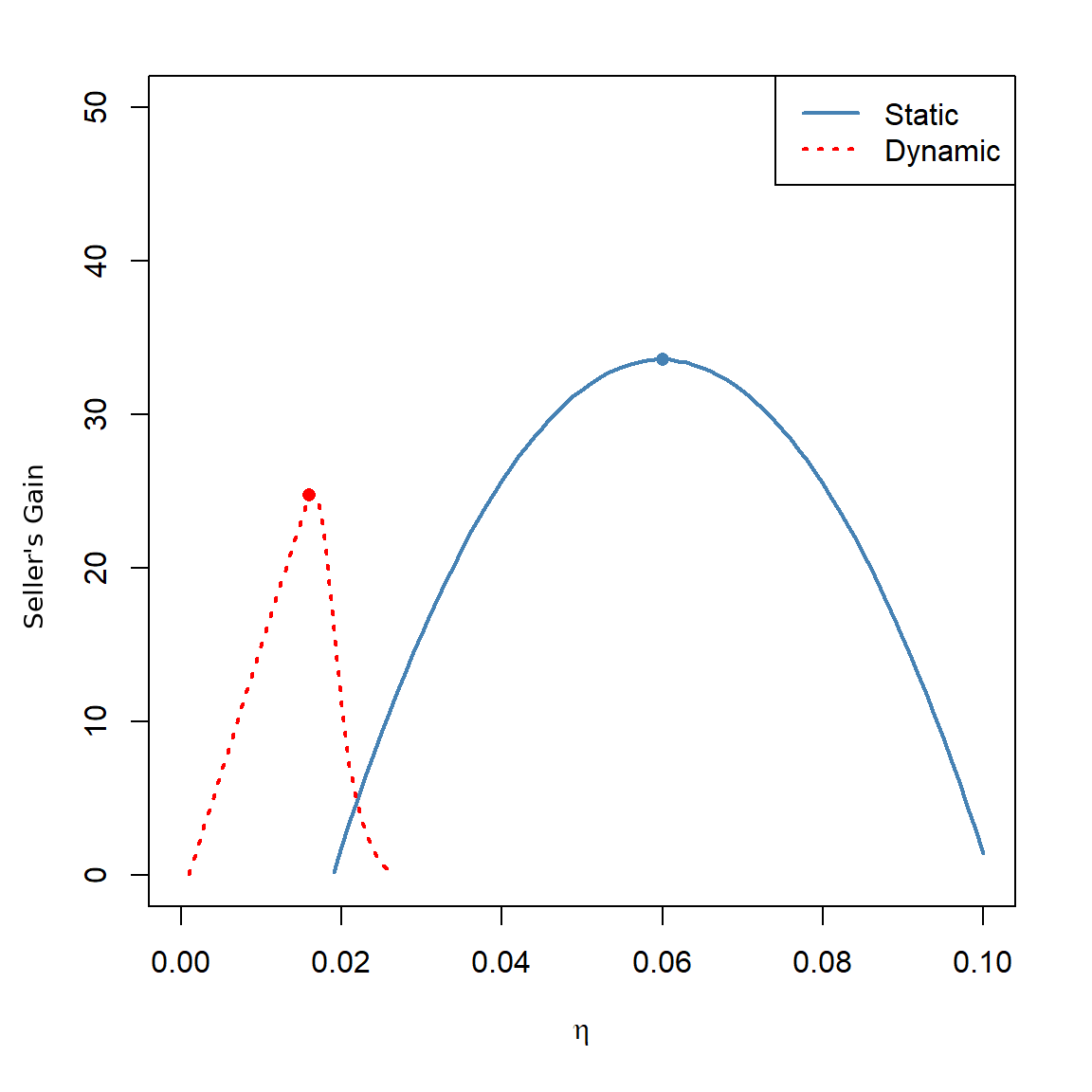} }%
    \qquad
    \subfloat[Buyer's {objective}  gain]{\includegraphics[width=7.5cm]{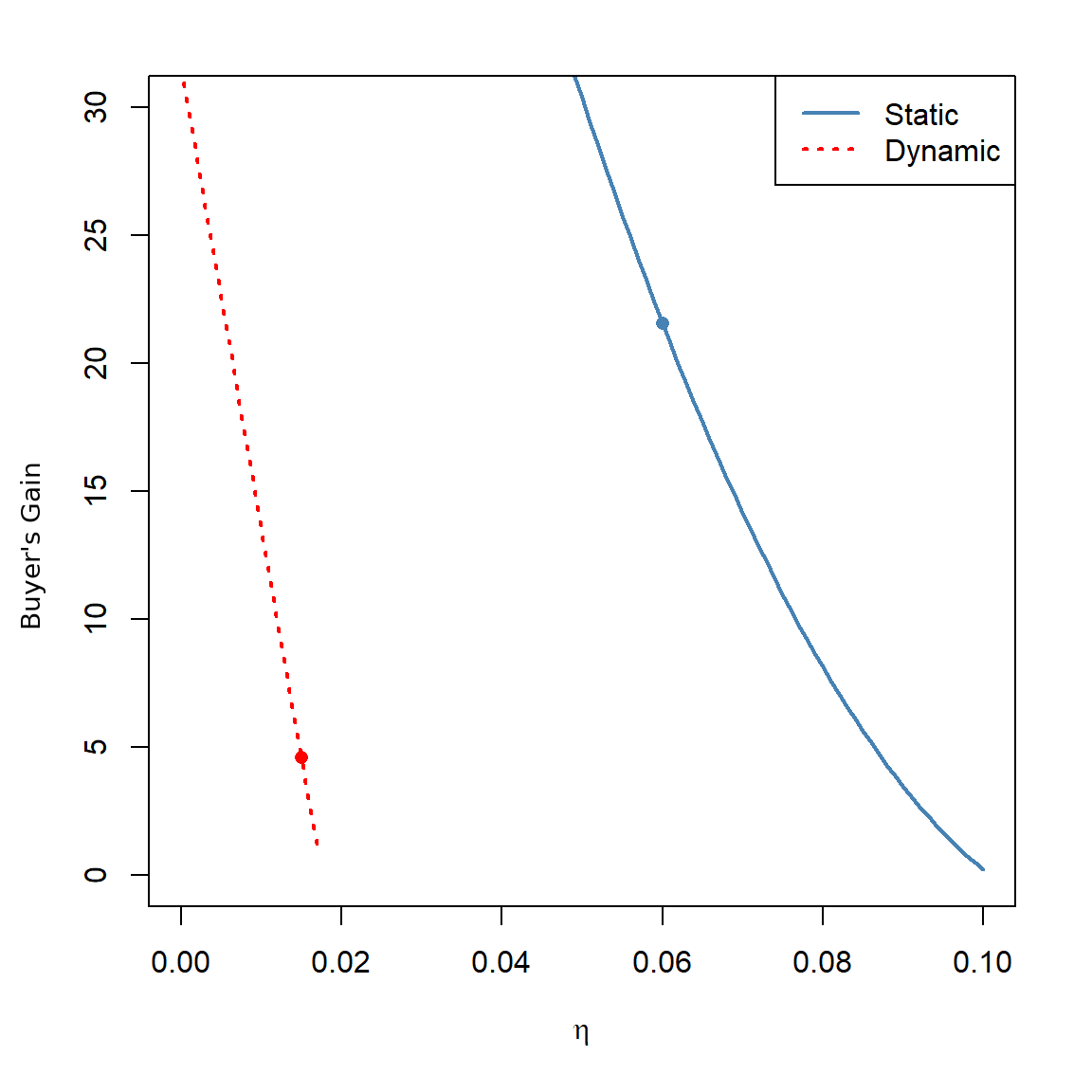} }%
    \caption{{Objective}  gains of the seller (left) and the buyer (right) when trading static vs.\ dynamic contracts  {with $(\gamma_b,\gamma_s)=(0.05,0.02)$}. The bold dot indicates the gains at $\eta^\ast$.}
    \label{fig:HP_no_ambiguity}%
\end{figure}

Next, we examine the case in which the seller is more risk-averse, with $(\gamma_b,\gamma_s)=(0.05,0.2)$. Figure~\ref{fig:HP_t_capital} displays the gains across $\eta$ for each contract type. From the seller’s perspective (Figure~\ref{fig:HP_t_capital}(a)), the $\eta$-region supporting transactions remains broader for the static contract but narrows materially relative to Figure~\ref{fig:HP_no_ambiguity}(a). Importantly, the dynamic contract now delivers a higher maximal gain, indicating a shift in preference toward dynamic designs as the more risk-averse seller avoids committing to long-dated protection. Similarly, Figure~\ref{fig:HP_t_capital}(b) shows that the buyer would also achieve a higher gain with a dynamic contract due to its lower price.

\begin{figure}[htbp]%
    \centering
    \subfloat[Seller's {objective} gain]{\includegraphics[width=7.5cm]{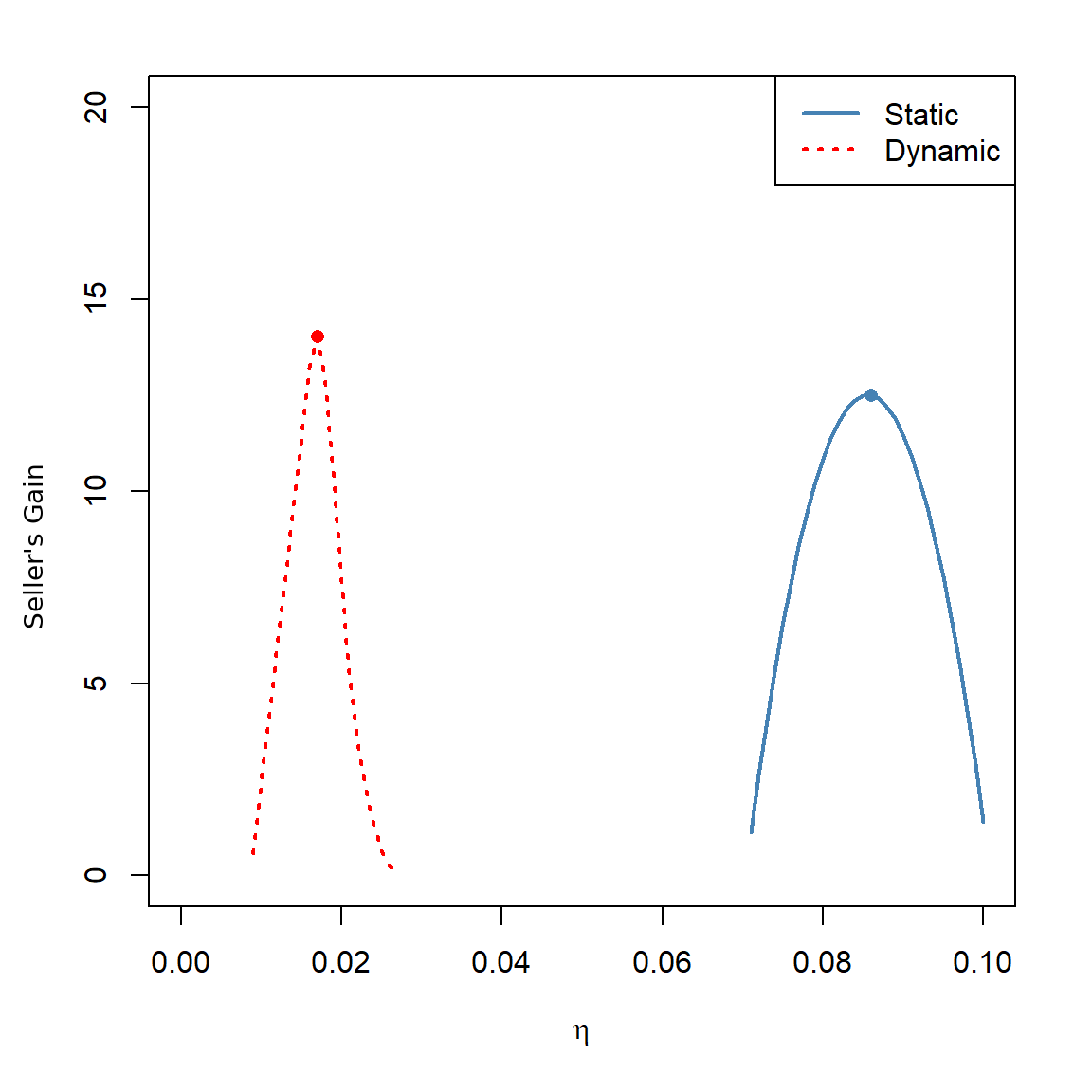} }%
    \qquad
    \subfloat[Buyer's {objective} gain]{\includegraphics[width=7.5cm]{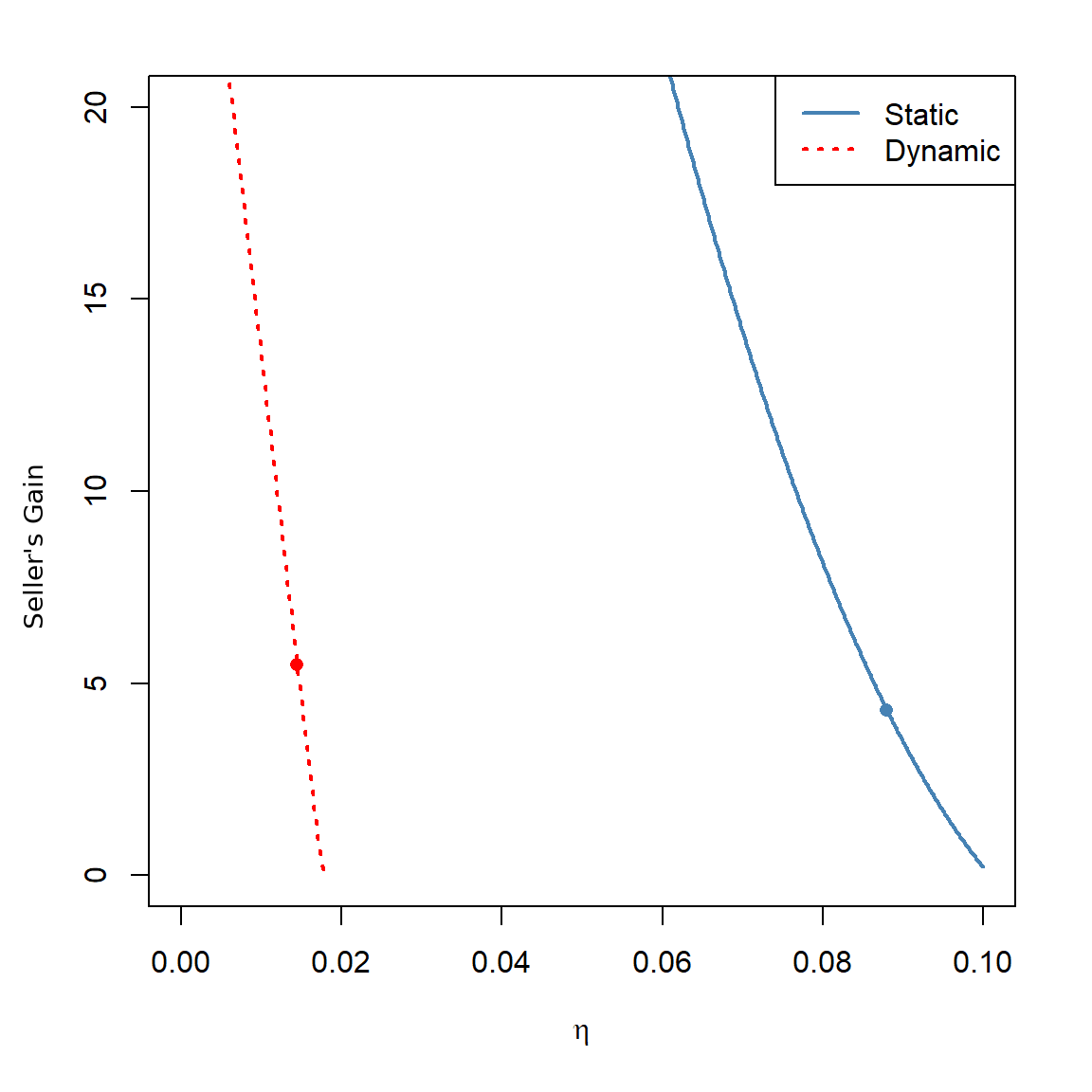} }%
    \caption{{Objective} gains of the seller (left) and the buyer (right) when trading static vs.\ dynamic contracts with $(\gamma_b,\gamma_s)=(0.05,0.2)$.}
    \label{fig:HP_t_capital}%
\end{figure}

\subsection{Fixed Payments and Optimal Hedge Ratios}

To understand risk coverage under each design, we examine the time profiles of fixed payments and the buyer’s optimal hedge ratios. For concreteness, we discuss the more risk-averse buyer case. Results with a more risk-averse seller are qualitatively similar.

Figure~\ref{fig:HPPAY_t}(a) shows the static contract’s fixed payments (in thousands) under a full hedge ($u=1$). These payments decline over time with the expected survivor count. Figure~\ref{fig:HPPAY_t}(b) shows the dynamic contract’s fixed payments, where each year-$t$ payment is based on the latest observed survivor count, $l_{t-1}$. Because $l_{t-1}$ is random, future fixed payments are uncertain. The figure reports 95\% confidence intervals to reflect this variability. Relative to the static case, the dynamic fixed leg is less predictable ex ante.

\begin{figure}[htbp]%
    \centering
    \subfloat[Static contract]{\includegraphics[width=7.5cm]{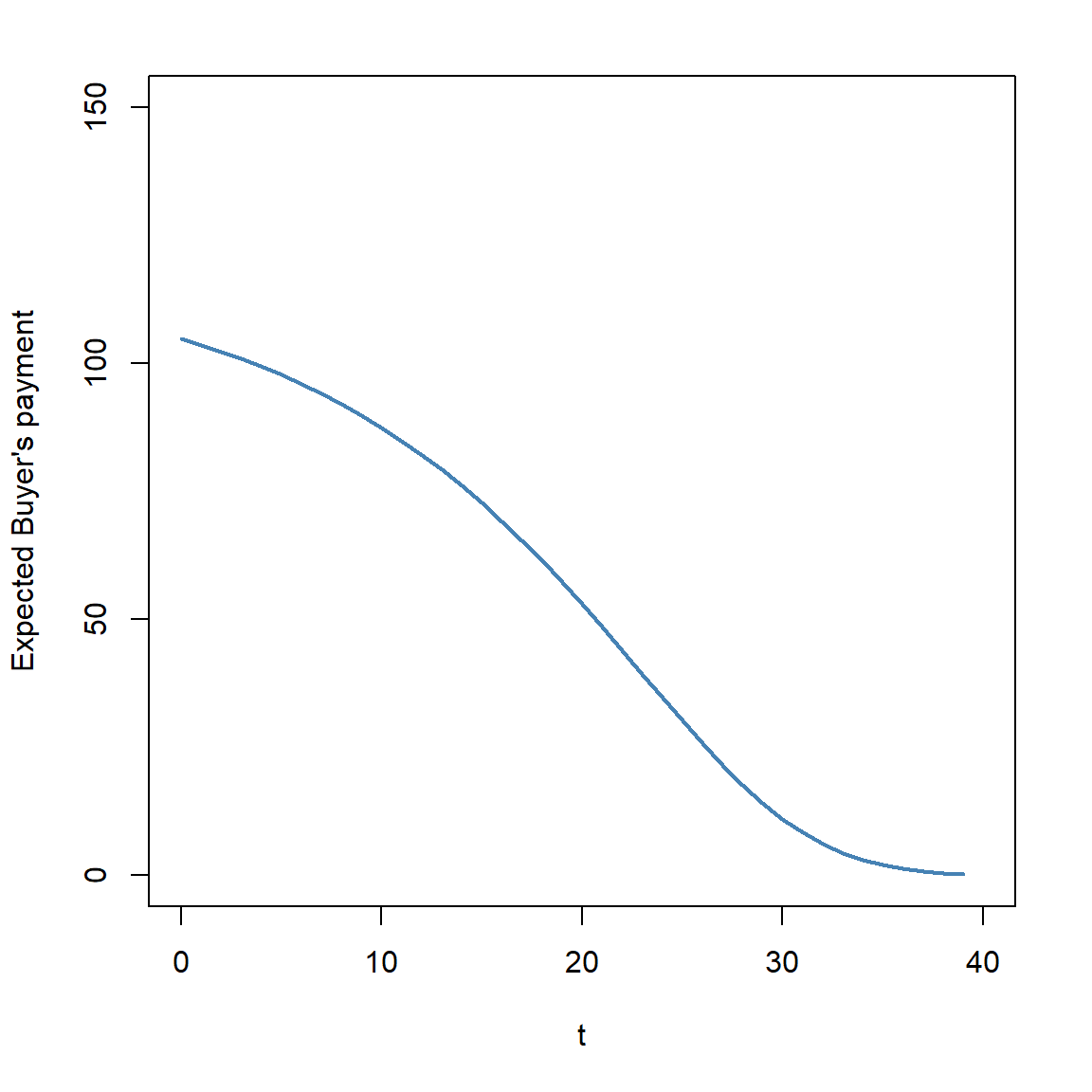} }%
    \qquad
    \subfloat[Dynamic contract]{\includegraphics[width=7.5cm]{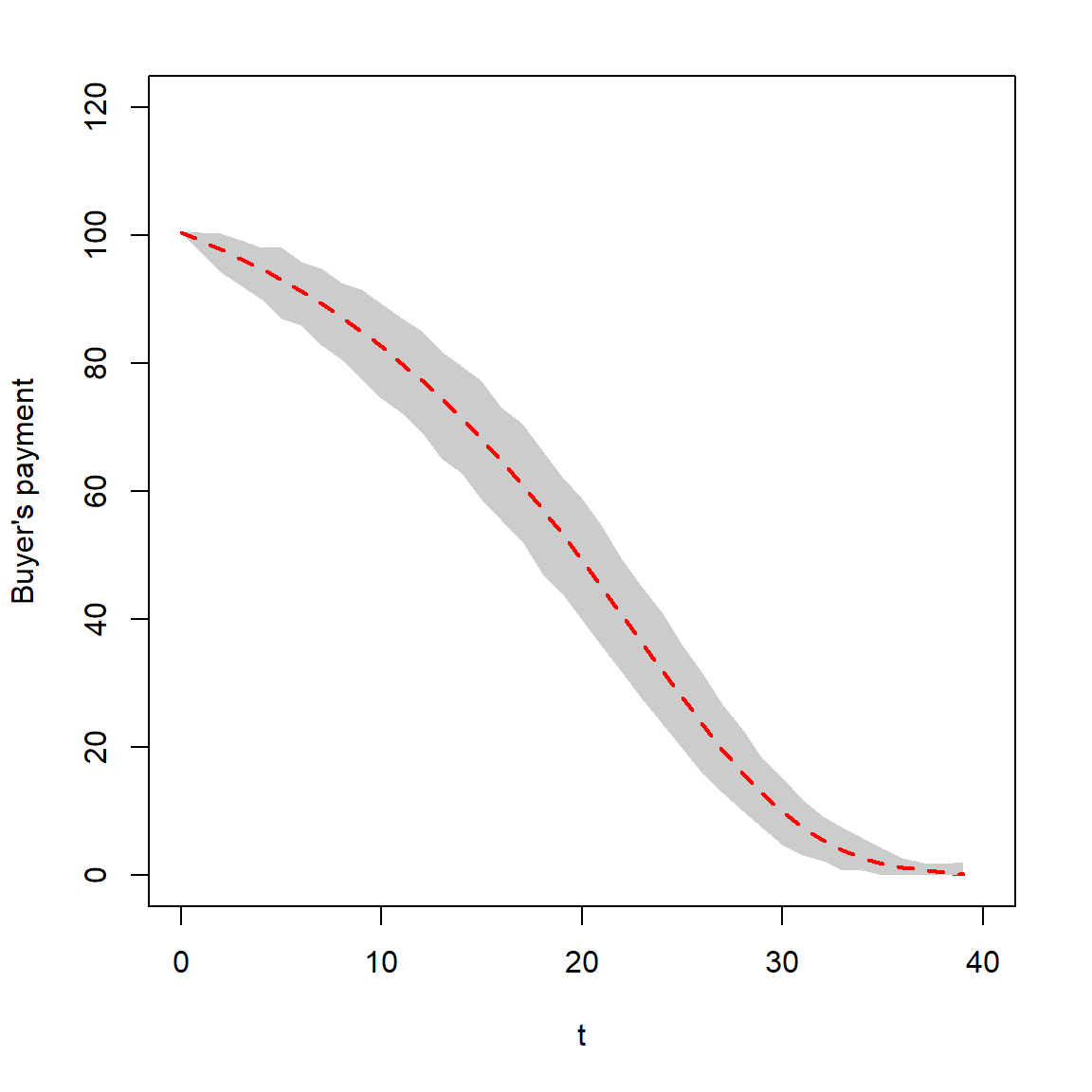} }
    \caption{Fixed-leg payments under a full hedge, $(1+\tilde{\eta}_t)\,\hat{l}_t^{[0]}$, for the static contract (left), and 95\% confidence intervals of the full-hedge fixed-leg payments, $(1+\tilde{\eta}_t)\,\hat{l}_t^{[t-1]}$, for the dynamic contract (right).}
    \label{fig:HPPAY_t}%
\end{figure}

Finally, Figure~\ref{fig:ut_traditional} compares the buyer’s \emph{optimal} hedge ratios. For the static contract (left panel), the optimal hedge ratio is about 0.4 -- buyers do not fully hedge because of the risk loading. Under the dynamic contract (right panel), the optimal hedge ratio is initially at or near a full hedge for roughly 25 years and then declines. This pattern is driven by the evolving trade-off between the mean and variance of the buyer’s liability: early on, the variance reduction dominates, whereas later the incremental variance reduction is small relative to the expected-liability increase from continued full hedging, favoring lower hedge ratios.

\begin{figure}[htbp]%
    \centering
    \subfloat[Static contract]{\includegraphics[width=7.5cm]{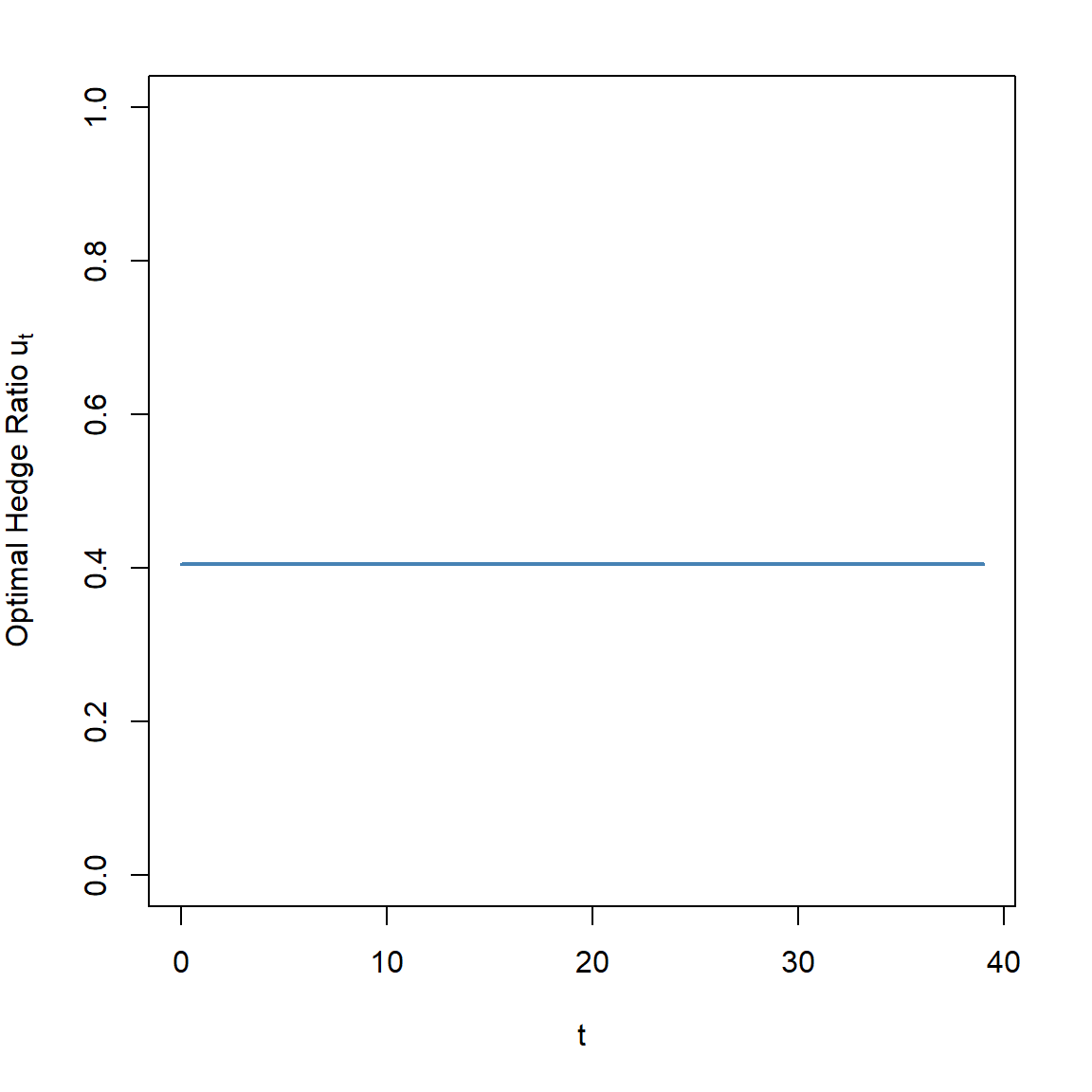} }%
    \qquad
    \subfloat[Dynamic contract]{\includegraphics[width=7.5cm]{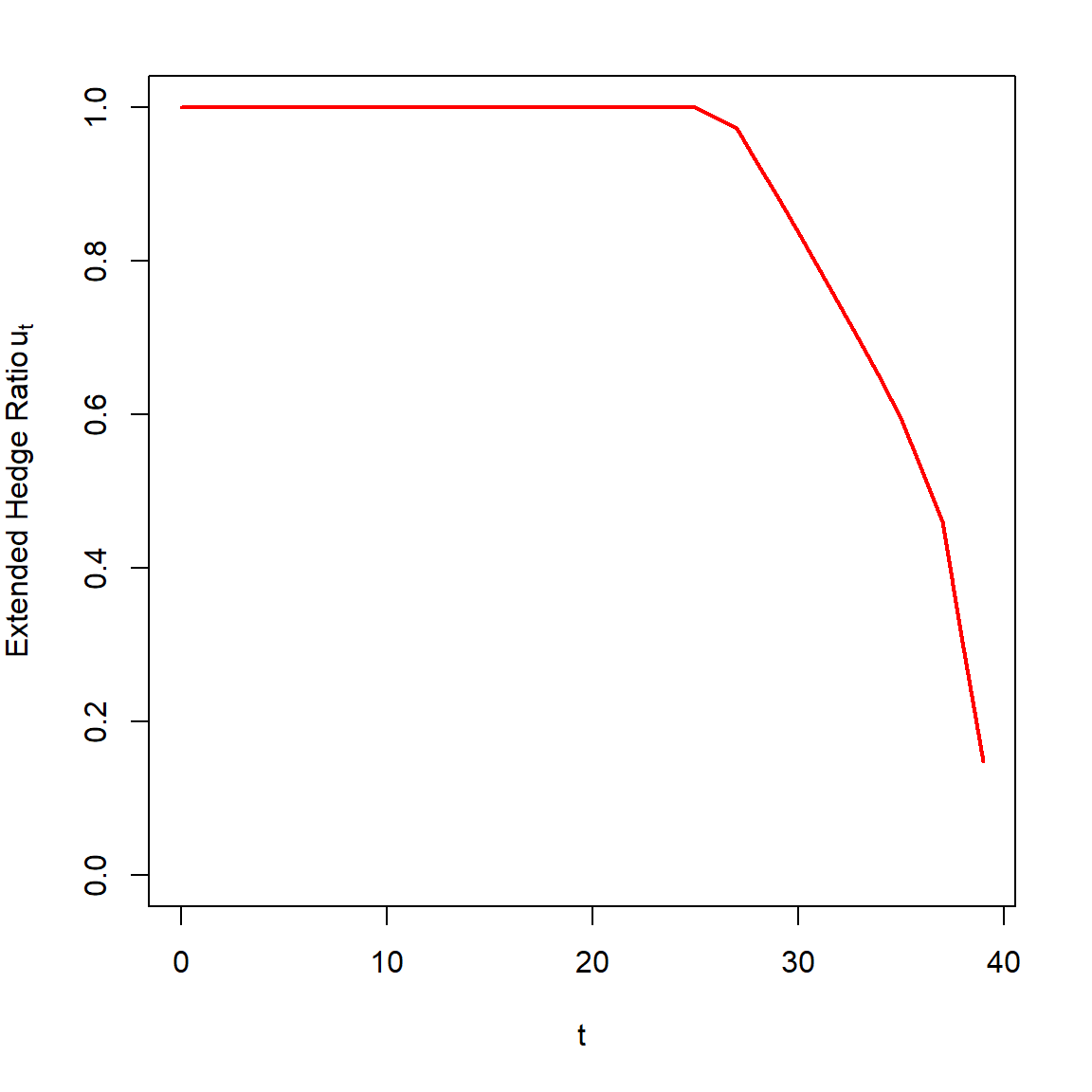} }
    \caption{Buyer’s optimal hedge ratio for static (left) and dynamic (right) contracts.}
    \label{fig:ut_traditional}%
\end{figure}

\subsection{ {Discussion and market implications}} \label{sec:discussion}

{Our numerical results point to a simple organizing principle: the preferred contract structure tracks which side is more risk-averse. When the buyer is more risk-averse than the seller, the static contract dominates, because long-horizon protection with known hedging costs aligns with the buyer’s objective, and buyers accept higher loadings for that coverage. In contrast, when the seller is more risk-averse, the dynamic contract dominates, as more risk-averse sellers avoid locking in long-dated, high-coverage exposures even at the expense of lower loadings.}

This pattern mirrors market practice: in the reinsurance channel, buyers (pension plans, annuity writers, and buy-out firms) are typically more risk-averse than global reinsurers, and long-dated, pre-committed indemnity swaps prevail; in the capital-market channel, sellers (institutional investors) are typically more risk-averse and prefer short-dated, rolled exposures, making dynamic designs the viable entry point. Thus, the model rationalizes the prevalence of long-dated indemnity swaps in reinsurance and the scarcity of long-horizon capital-market deals, alongside the emergence of short-term instruments, such as mortality CAT bond and longevity sidecars.

\section{{Seller-Side Ambiguity in the Capital-Market Setting}} \label{sec:information_asymmetry}

We now focus on the capital-market configuration highlighted in Section~\ref{sec:discussion}, with $\gamma_s>\gamma_b$. {To keep this extension tractable, we do not introduce hidden buyer types, screening, or incentive constraints.} We assume that the {reference measure} $\mathbb{P}$ serves as the most plausible proxy for the real-world measure, and thus the fixed payments of the longevity swaps, $\hat{l}_t$, are still calculated under $\mathbb{P}$. Furthermore, the fixed leg keeps the time-varying loading from~\eqref{eq:loading}, i.e.\ $(1+\tilde\eta_t)\hat l_t$ with $\tilde\eta_t=(1+\delta_L)^t\eta$.

The seller considers a collection of candidate real-world measures $\mathcal Q=\{Q_\lambda:\lambda>0\}$, with the reference measure $\mathbb P$ corresponding to $\lambda=1$. Under $Q^\lambda$, the expected $(t-s)$-year survival probability for age $x+s$ is
\begin{equation}
\mathbb E_{Q^\lambda}\!\left[\,_{t-s}p_{x+s}\right] := \left(\,_{t-s}\hat p_{x+s}\right)^{\lambda}, \label{eq:Qlambda_def_restate}
\end{equation}
where $\,_{t-s}\hat p_{x+s}$ is the estimated survival probability under the reference measure $\mathbb P$. 
With this construction, a $\lambda $ value \textit{greater} than 1
implies \textit{lower} survival probabilities and, consequently, a reduced 
life expectancy compared to what is predicted by the  {reference measure}. It
is important to note that this definition is merely a convenient way to
represent the expected values of  {real-world} survival probabilities under various
priors within our framework. Alternative definitions are also feasible and
can be integrated as needed.

In response to the uncertainty regarding the  {real-world} survival probabilities, we
adopt the maxmin expected utility framework \citep{gilboa1989maxmin} to find
the best strategy under the worst-case scenario. This is arguably the most
common approach in economic decision-making under ambiguity 
\citep[see,
e.g.,][]{dimmock2016ambiguity,kostopoulos2022ambiguity}. In our context, the worst-case scenario corresponds to the prior
probability measure $\mathbb{Q}^{\lambda }$ that leads to the largest
survival probabilities of the policyholders. The optimal contracting process
in the presence of ambiguity is detailed below.

\subsection{Stackelberg Game} \label{sec:ambiguity}

For the \textit{static contract}, given a prior probability measure $\mathbb{%
Q}^{\lambda }$, the buyer selects the optimal constant hedging ratio $u$ for
a given risk loading $\eta \geq 0$ by solving: 
\begin{equation}\label{insurer objective_static ambiguity}
\sup_{u}\left\{ \mathbb{E}^{\mathbb{Q}^{\lambda }}\left[ B_{T}(u,\eta )%
\right] -\frac{\gamma _{b}}{2}\mathrm{Var}^{\mathbb{Q}^{\lambda }}\left[
B_{T}(u,\eta )\right] \right\} .
\end{equation}%
The seller, considering the buyer's optimal response $u^{\ast }(\eta )$ and
following the maxmin principle, then calculates the optimal $\eta $ under
the worst-case scenario within $\mathcal{Q}$, the set of prior probability
measures: 
\begin{equation}
\sup_{\eta \geq 0}\inf_{\mathbb{Q}^{\lambda }\in \mathcal{Q}}\left\{ \mathbb{%
E}^{\mathbb{Q}^{\lambda }}\left[ S_{T}\bigl(u^{\ast }(\eta )\bigr)\right] -%
\frac{\gamma _{s}}{2}\mathrm{Var}^{\mathbb{Q}^{\lambda }}\left[ S_{T}\bigl(%
u^{\ast }(\eta )\bigr)\right] \right\} .  \label{MEU ambiguity}
\end{equation}

In the context of a \textit{dynamic contract}, the buyer maximizes the
mean-variance criterion with a time-varying hedge ratio: 
\begin{equation}\label{insurer objective ambiguity}
\sup_{\mathbf{u}}\left\{ \mathbb{E}^{\mathbb{Q}^{\lambda }}\left[ B_{T}(%
\mathbf{u},\eta )\right] -\frac{\gamma _{b}}{2}\mathrm{Var}^{\mathbb{Q}%
^{\lambda }}\left[ B_{T}(\mathbf{u},\eta )\right] \right\} ,
\end{equation}%
with $\mathbf{u}=(u_{t})_{t\in \{0,1,2,...,T-1\}}$. The seller's
corresponding optimization problem is then defined as: 
\begin{equation}
\sup_{\eta \geq 0}\inf_{\mathbb{Q}^{\lambda }\in \mathcal{Q}}\left\{ \mathbb{%
E}^{\mathbb{Q}^{\lambda }}\left[ S_{T}\bigl(\mathbf{u}^{\ast }(\eta )\bigr)%
\right] -\frac{\gamma _{s}}{2}\mathrm{Var}^{\mathbb{Q}^{\lambda }}\left[
S_{T}\bigl(\mathbf{u}^{\ast }(\eta )\bigr)\right] \right\} ,
\label{MEU_dynamic ambiguity}
\end{equation}%
where $\mathbf{u}^{\ast }=\bigl(u_{t}^{\ast }(\eta )\bigr)_{t\in
\{0,1,2,...,T-1\}}$.

The derivation of the optimal $\eta$ and the corresponding hedging ratios for the static contract is detailed in Appendix \ref{App： Multi AB}. A notable challenge in the dynamic contract scenario is the time inconsistency introduced by the time-varying hedging ratio, a complex aspect often encountered in stochastic mean-variance optimization problems \citep[see, e.g.,][]{bjork2010general,bjork2014mean}. To address this, we apply the game-theoretic approach proposed by \cite{bjork2010general}, formulating the optimal hedge ratio as an equilibrium strategy by solving a system of extended Hamilton-Jacobi-Bellman (EHJB) equations. Comprehensive explanations of the equilibrium strategy and associated derivations are provided in Appendix~\ref{App： Multi C}.

\subsection{Construction of $\mathcal{Q}$}  \label{sec:mQ}

To parameterize ambiguity in economically interpretable units, we map $\lambda$ to the remaining life expectancy at age $x$,
\[
\hat e_x(\lambda):=\mathbb E_{Q^\lambda}\Big[\sum_{k=1}^{T}\,_{k}p_x\Big]=\sum_{k=1}^{T} \,_{k}\hat p^{\,\lambda}_{x},\qquad \hat e_x:=\hat e_x(1),
\]
and define
\[
\mathcal Q=\Big\{Q^\lambda:\ \hat e_x(\lambda)\in\big[(1-\alpha)\hat e_x,\ (1+\alpha)\hat e_x\big]\Big\},
\]
where $\alpha\in[0,1)$ indexes the degree of ambiguity (the band is $\pm\alpha$ around $\hat e_x$ and reduces to the no-ambiguity case when $\alpha=0$). For instance, with $x=60$ and $\alpha=0.05$ (i.e., $\pm5\%$ around $\hat e_{60}$), we obtain $\lambda\in[0.81,1.21]$ under our calibration. Table~\ref{tab:lambda_alpha} reports additional ranges.

\begin{table}[htbp]
\centering
\caption{Interval for $\lambda$ at different ambiguity levels $\alpha$ (illustration with $x=60$).}
\label{tab:lambda_alpha}
\begin{tabular}{c|c}
\toprule
$\alpha$ & Interval of $\lambda$ \\
\midrule
$0.025$ & $[0.95,\,1.05]$ \\
$0.05$ & $[0.81,\,1.21]$ \\
$0.10$ & $[0.59,\,1.51]$ \\
$0.15$ & $[0.42,\,1.83]$ \\
$0.20$ & $[0.26,\,2.19]$ \\
\bottomrule
\end{tabular}
\end{table}

\subsection{Numerical Results} \label{sec:numerical_ambiguity}

We consider $(\gamma_b,\gamma_s)=(0.05,0.20)$ and three ambiguity levels: $\alpha\in\{0\%,\,2.5\%,\,5\%\}$.  Figure~\ref{fig:HP_ca} shows the objective gains for both contract types across the three levels of ambiguity. We see that objective gains from static contracts are much more sensitive to changes in ambiguity compared to dynamic contracts. Specifically, at an ambiguity level of $\alpha =5\%$, i.e., when the seller believes that the  {real-world} life expectancy of the annuitants could be 5\% larger than that indicated by the benchmark measure, static contracts no longer provide beneficial outcomes for the seller at any $\eta$, leading to a \textit{collapse of the market for static contracts}. In contrast, objective gains for dynamic contracts decline less sharply with increased ambiguity, and the market still exists at $\alpha =5\%$.

Moreover, it is intriguing to see from Figure \ref{fig:HP_ca} that the presence of ambiguity does not necessarily lead to an increase in the risk loading. For both contract types, the optimal risk loading $\eta^*$ could decrease as ambiguity rises. This \textquotedblleft counter-intuitive\textquotedblright\ result can be explained as follows. Under the maxmin expected utility valuation, the seller indeed focuses on the worst-case scenario, namely buyers with the longest remaining life expectancy (corresponding to the lowest level of $%
\lambda $). By nature, these buyers exhibit stronger demand for risk transfer. A marginal reduction in the risk loading may induce a disproportionately large increase in their demand compared to buyers with shorter remaining life expectancy. Consequently, it may be optimal for the seller to moderately lower the risk loading and offer a form of \textquotedblleft pricing discount\textquotedblright\ to encourage these buyers to choose higher hedge ratios, and ultimately enhance the seller's profit.

\begin{figure}[htbp]%
    \centering
    \subfloat[Static Contract]{\includegraphics[width=7.5cm]{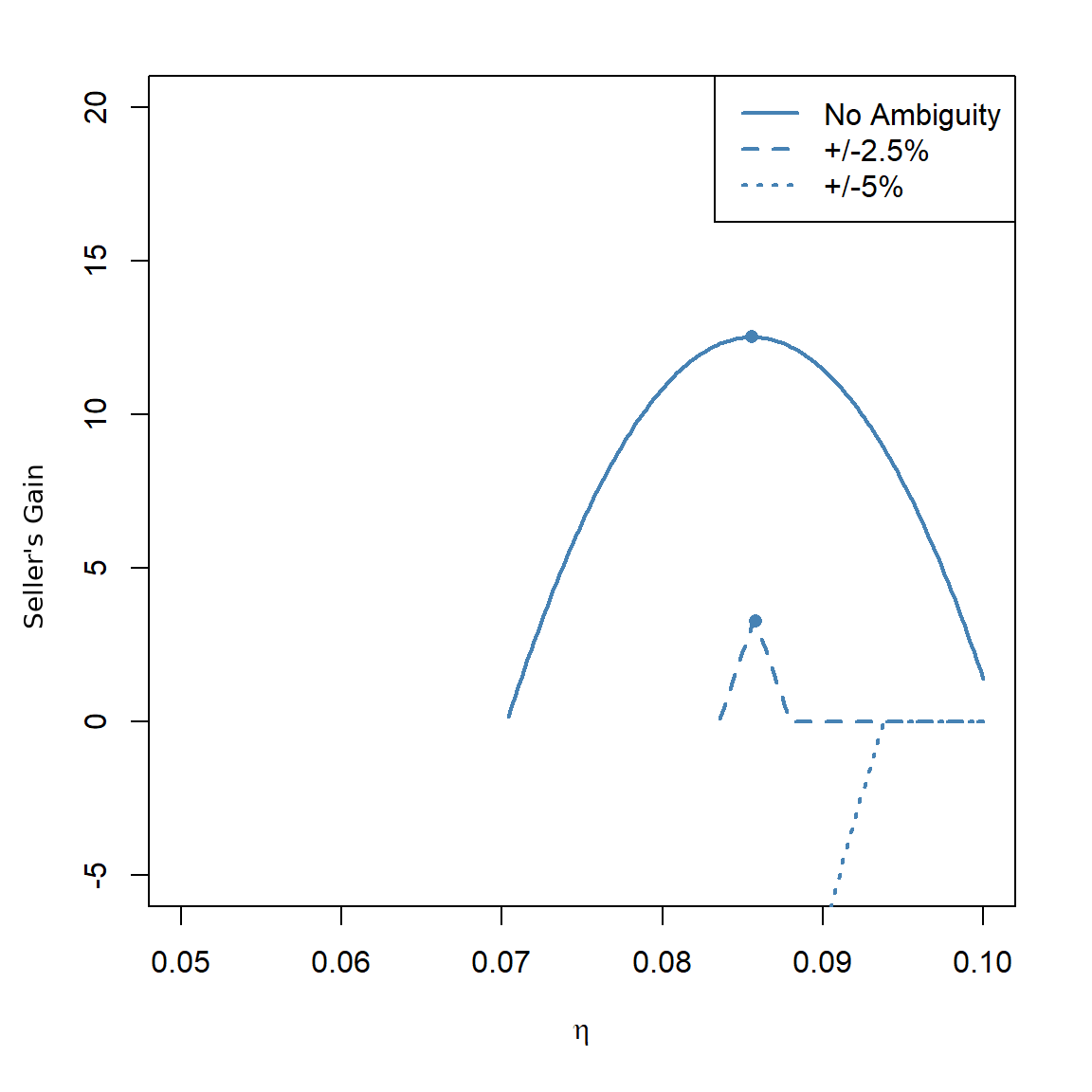} }%
    \qquad
    \subfloat[Dynamic Contract]{{\includegraphics[width=7.5cm]{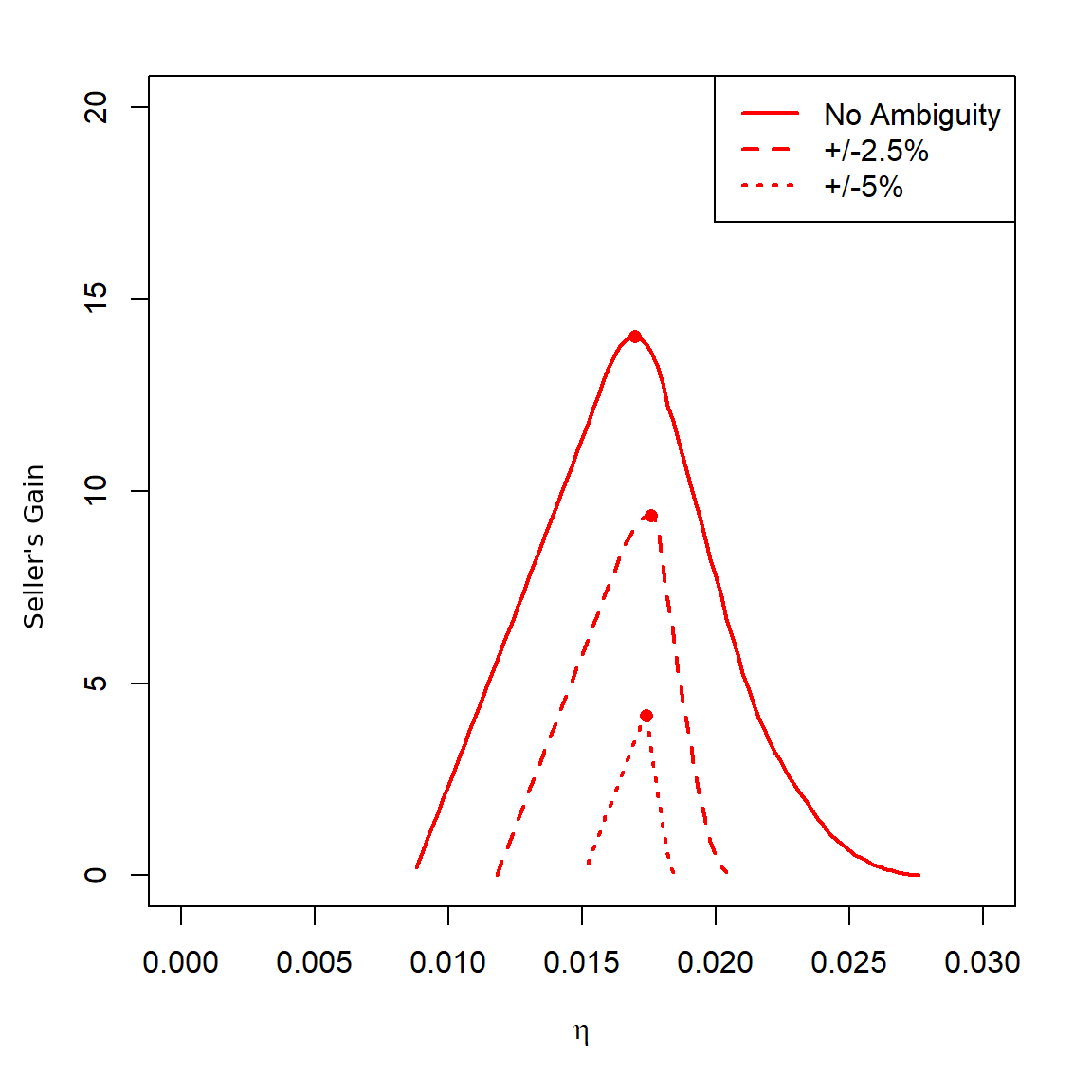} }}%
    \caption{{Objective} gain of the seller when trading the static (left) and the dynamic contract (right) across different ambiguity scenarios with $(\gamma_b,\gamma_s)=(0.05,0.20)$.}%
    \label{fig:HP_ca}%
\end{figure}

Figure \ref{fig:H_Capital_t} shows the buyer's objective gain for both static and dynamic contracts at an ambiguity level of $\alpha=2.5\%$. We see that the dynamic contract offers higher gains for the buyer across nearly all prior probability measures. This is because the more risk-averse seller requires high risk loadings to offer the static contract. Such increased loadings would sharply reduce the buyer's objective gains, making the dynamic contract a more cost-effective option despite its lower risk coverage. Specifically, for $\lambda$ values above 1.03, the buyer avoids purchasing any static contract. 

The cross-prior dispersion in Figure~\ref{fig:H_Capital_t} is also informative. Under the static contract, buyer objective gains vary substantially across priors and become negative for sufficiently large $\lambda$, so positive trade becomes concentrated in priors with higher perceived longevity exposure at the seller's chosen loading. Under the dynamic contract, buyer objective gains are more stable across priors because hedge ratios can be adjusted over time.

\begin{figure}[htbp]%
    \centering
    \includegraphics[width=7.5cm]{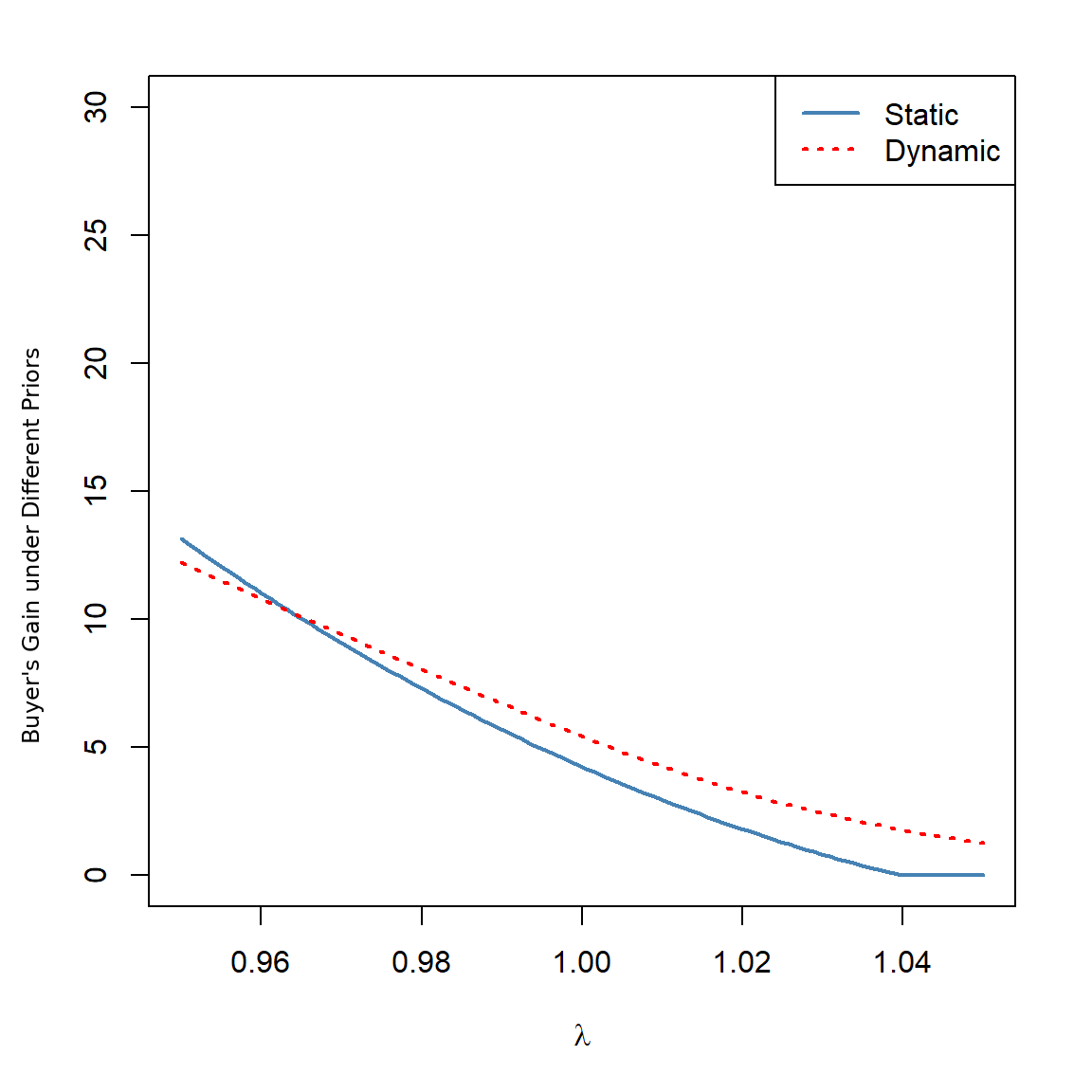}
    \caption{The buyer's gain across all prior probability measures with both contract types with the ambiguity level of $\alpha=2.5\%$ with $(\gamma_b,\gamma_s)=(0.05,0.20)$.}%
    \label{fig:H_Capital_t}%
\end{figure}

In summary, the analysis in this section shows that {the seller-side ambiguity} contributes to the scarcity of long-term longevity risk transfers in the capital market. While short-term instruments may not fully address the long-term nature of longevity risk, they are a more feasible solution than static contracts and thus offer valuable initial steps toward developing an efficient and active capital market for longevity risk transfer.

{
\section{Extension: Index-Based Longevity Swaps}\label{sec:indexbase}

This section extends the analysis of Section~\ref{sec:numerical_analysis} to the case of an index-based longevity swap, which hedges the buyer’s cohort using a broad reference population (e.g., national lives). For the discussion in this section, let $i=1$ denote the \emph{reference} (index) population and $i=2$ the buyer’s \emph{annuity} population. We work on the same filtered probability space $(\Omega,\mathcal F,\mathbb F,\mathbb P)$ as in Section~\ref{sec:model}, 
but augment the filtration $\mathbb F=\{\mathcal F_t\}$ to include mortality information from the reference population, 
so that both $l_t^{(1)}$ and $l_t^{(2)}$ are $\mathcal F_t$-measurable.

Relative to the indemnity specification in Section~\ref{sec:model}, \emph{both} legs of the swap are modified. The \emph{floating} leg settles on an index-implied survivor count, $l^{(1)}_t$, rather than the buyer’s realized survivor count $l^{(2)}_t$. The \emph{fixed} leg is priced against forecasts of the index-implied survivor count, $\hat l^{(1)}_t$, computed under the reference measure $\mathbb{P}$. These changes introduce basis risk because, in general, $l^{(1)}_t\neq l^{(2)}_t$. Due to space limitations, this section focuses on the no-ambiguity case.

The future survival counts of both populations are simulated using the age–period–cohort two-population gravity model of \citet{dowd2021hedging}. We estimate the reference population dynamics with U.S.\ unisex death rates and the annuity population with U.K.\ unisex death rates. Full details of the model and simulation are in Appendix~\ref{app:index_apc}.

\subsection{Contract specification and equilibrium} \label{sec:index_contract}

To keep units consistent with the indemnity swap, we scale the reference index by the buyer’s initial cohort size $l_0^{(2)}$. Parallel to Section~\ref{sec:contract}, the fixed leg charges $u_{t-1}(1+\tilde{\eta}_t)\,\hat l^{(1)}_t$ where, under the  {reference measure} $\mathbb{P}$,
\[
\hat{l}^{(1),[0]}_t:=\mathbb{E}^{\mathbb{P}}[l^{(1)}_t], 
\qquad
\hat{l}^{(1),[t-1]}_t:=\mathbb{E}^{\mathbb{P}}[l^{(1)}_t\mid\mathcal{F}_{t-1}].
\]
As in Section~\ref{sec:notation}, we omit superscripts $[0]$ and $[t-1]$ when no confusion arises. The index-hedged surpluses evolve as
\begin{align}
B_t^{\mathrm{ind}} &= B_{t-1}^{\mathrm{ind}}(1+r) - l_t^{(2)} 
+ u_{t-1}\bigl(l^{(1)}_t-(1+\tilde{\eta}_t)\hat{l}^{(1)}_t\bigr), \label{eq:B_index}\\
S_t^{\mathrm{ind}} &= S_{t-1}^{\mathrm{ind}}(1+r) 
- u_{t-1}\bigl(l^{(1)}_t-(1+\tilde{\eta}_t)\hat{l}^{(1)}_t\bigr). \label{eq:S_index}
\end{align}
Relative to \eqref{eq:X}–\eqref{eq:Y}, the floating leg uses $l^{(1)}_t$ (index exposure) and the fixed leg uses the index forecast $\hat{l}^{(1)}_t$ under $\mathbb{P}$ with the same time-varying loading $\tilde{\eta}_t$. Similar to the previous analysis, we adopt the parameterization $\tilde{\eta}_t=(1+\delta_L)^t\eta$.

The Stackelberg timing and mean–variance criteria are identical to Section~\ref{S:Game}, with $(B_T,S_T)$ replaced by $(B_T^{\mathrm{ind}},S_T^{\mathrm{ind}})$. 
For the static (resp.\ dynamic) swap, the buyer chooses a constant (resp.\ adapted) hedge ratio to maximize her mean-variance objective,
and the seller selects $\eta\ge 0$ anticipating the buyer’s best response.

\subsection{Numerical results}

We retain the baseline calibration used for indemnity swaps in Section~\ref{sec:numerical_analysis}, namely, $r=0.02$, $x=60$, $l_0=100{,}000$, $T=40$, and $\delta_L=0.03$. Relative to indemnity swaps, optimal risk loadings $\eta^{*}$ are systematically lower for index-based swaps under both contract structures. Intuitively, basis risk attenuates the variance-reduction from hedging, i.e., each unit of hedge transfers less idiosyncratic risk of the buyer’s book. This lowers the buyer’s demand. Anticipating this, the seller’s Stackelberg-optimal loading occurs at a lower $\eta$ than in the indemnity case. 

Figure~\ref{fig:HP_no_ambiguity_index} shows that, with a more risk-averse buyer ($(\gamma_b,\gamma_s)=(0.05,0.02)$), the seller’s objective-gain curve for the static index swap peaks at a lower $\eta$ than its indemnity counterpart in Figure~\ref{fig:HP_no_ambiguity}. The same shift is visible for dynamic designs. Figure~\ref{fig:HP_t_capital_index} shows the analogous shift when the seller is more risk-averse $(\gamma_b,\gamma_s)=(0.05,0.20)$. The bold dots in each panel mark the seller’s chosen $\eta^{*}$. Taken together, the figures show that indexation lowers $\eta^{*}$ regardless of which side is more risk-averse.

\begin{figure}[htbp!]%
    \centering
    \subfloat[Seller's {objective} gain]{\includegraphics[width=7.5cm]{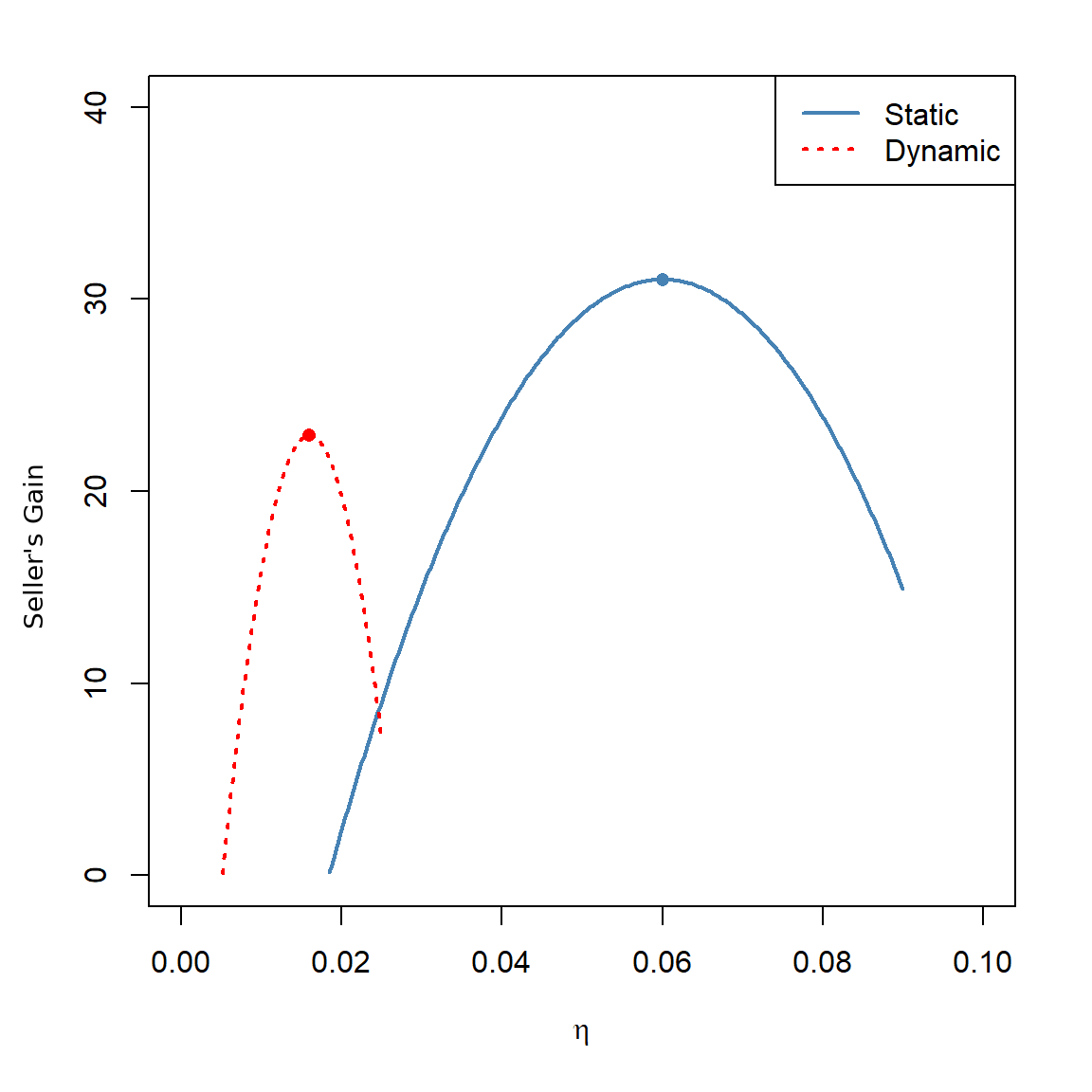} }%
    \qquad
    \subfloat[Buyer's {objective} gain]{\includegraphics[width=7.5cm]{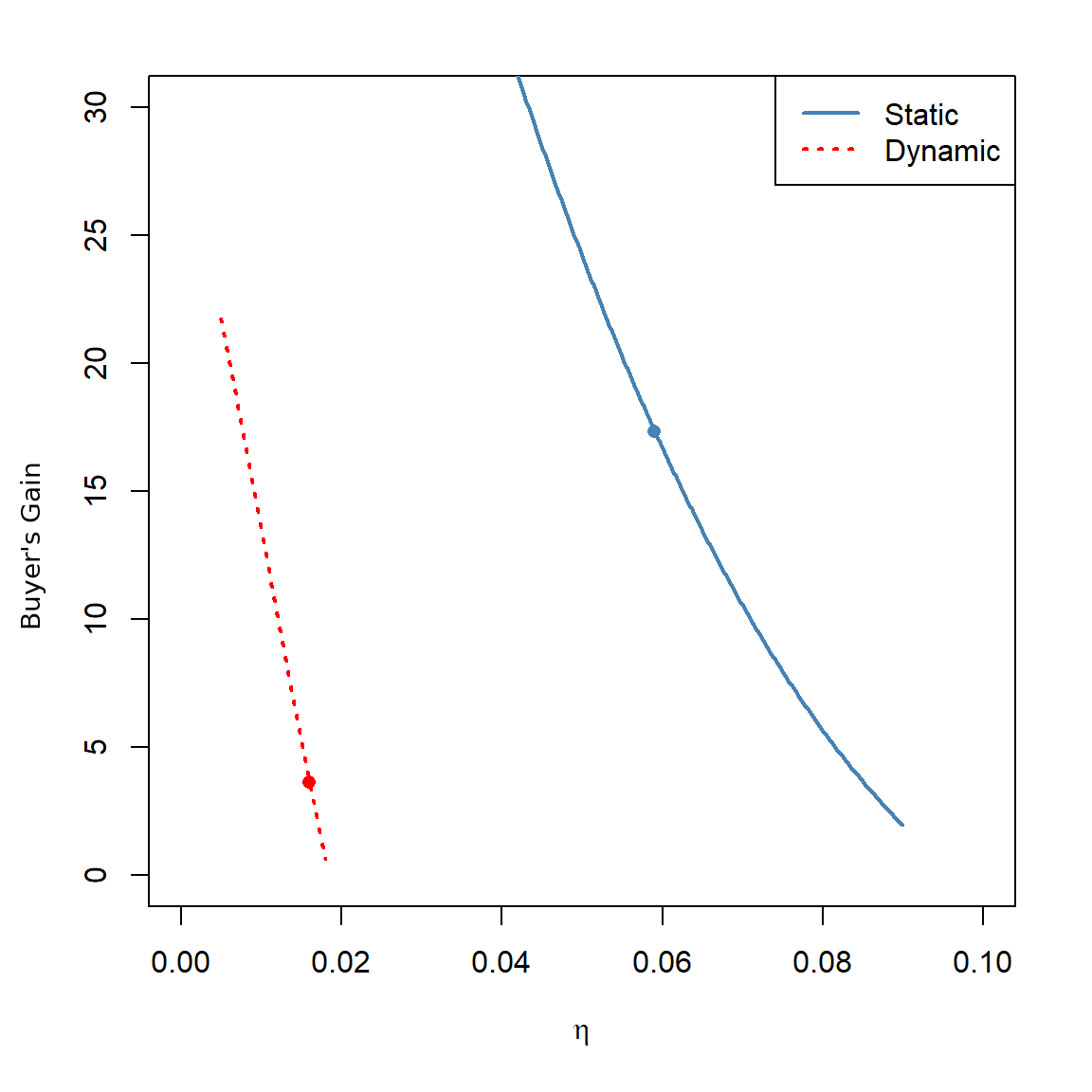} }%
    \caption{{Objective} gains of the seller (left) and the buyer (right) when trading static vs.\ dynamic \emph{index-based} contracts with $(\gamma_b,\gamma_s)=(0.05,0.02)$. The bold dot indicates the gains at $\eta^\ast$.}
    \label{fig:HP_no_ambiguity_index}%
\end{figure}

For both parties and in both risk-aversion regimes, equilibrium objective gains are lower with index-based swaps than with indemnity swaps. This is visible when comparing Figures~\ref{fig:HP_no_ambiguity_index}--\ref{fig:HP_t_capital_index} to their indemnity counterparts Figures~\ref{fig:HP_no_ambiguity}--\ref{fig:HP_t_capital}. Once basis risk is introduced, the entire objective-gain curve shifts downward and the maximal gains at $\eta^{*}$ decline for both seller and buyer. 

\begin{figure}[htbp]%
    \centering
    \subfloat[Seller's {objective} gain]{\includegraphics[width=7.5cm]{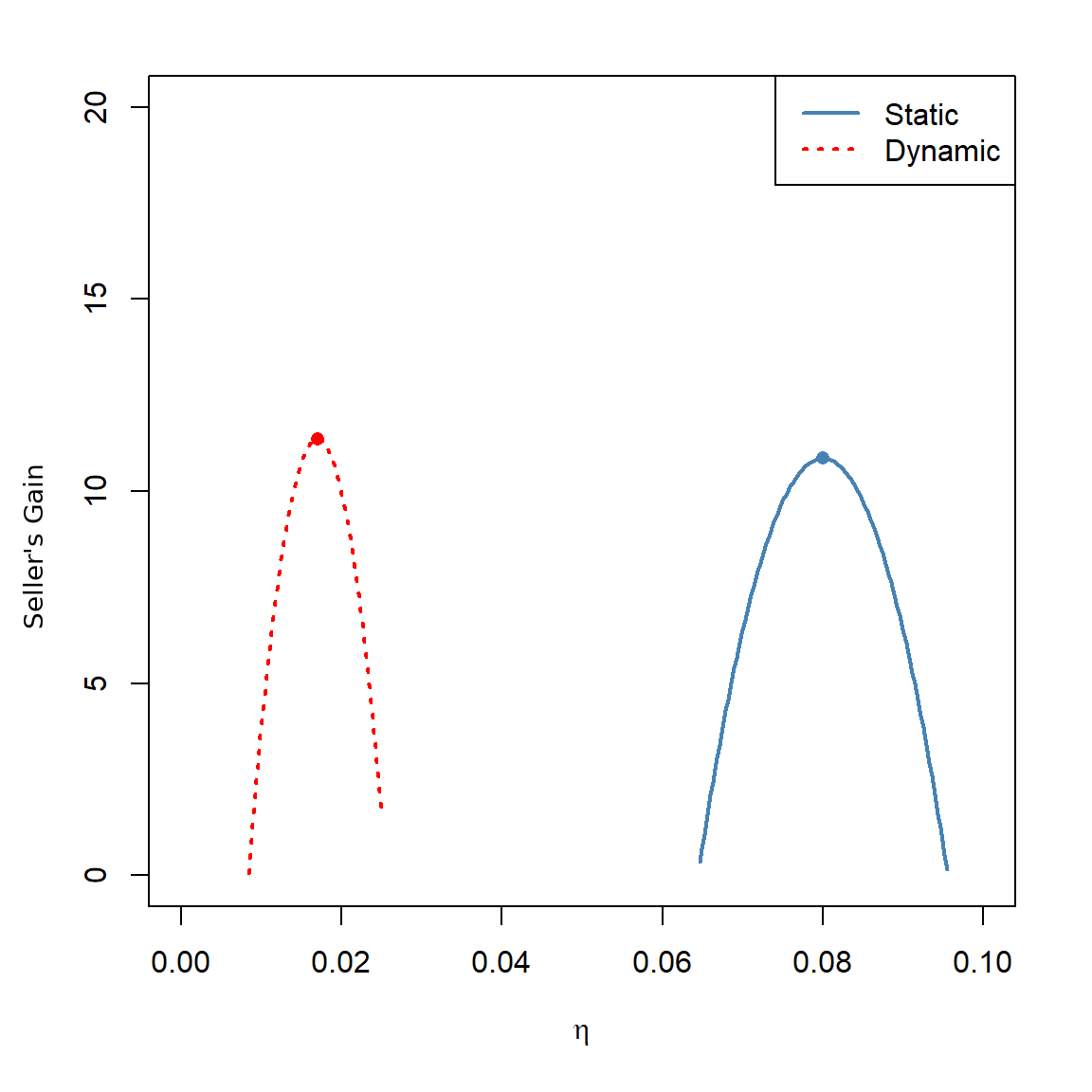} }%
    \qquad
    \subfloat[Buyer's {objective} gain]{\includegraphics[width=7.5cm]{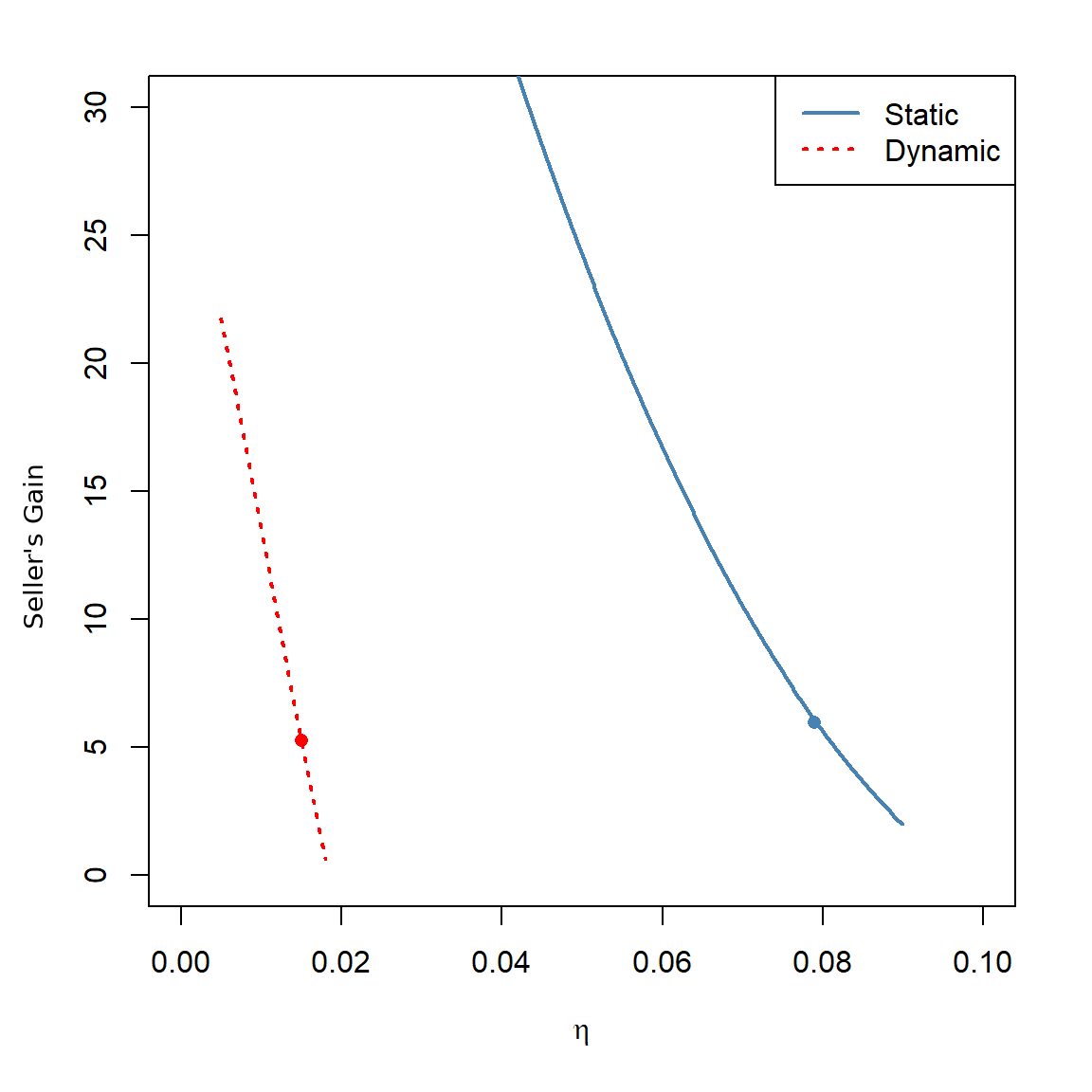} }%
    \caption{{Objective} gains of the seller (left) and the buyer (right) when trading static vs.\ dynamic \emph{index-based} contracts with $(\gamma_b,\gamma_s)=(0.05,0.2)$.}
    \label{fig:HP_t_capital_index}%
\end{figure}

Furthermore, the contract type preference from Section~\ref{sec:numerical_analysis} persists with the use of index-based swaps. When the buyer is more risk-averse than the seller, the static index-based swap dominates for the seller at equilibrium because buyers value long-horizon protection with known hedging costs and accept higher $\eta$ within the feasible range. When the seller is more risk-averse, the dynamic index-based swap dominates because sellers avoid long-dated, higher-coverage exposures and accept a lower $\eta$ in exchange for short-term rolled risk. The ordering in Table~\ref{tab:expanded_table1} is consistent with this principle. The optimal loadings and the objective gains with both the indemnity and the index-based swaps under different scenarios are summarized in Table~\ref{tab:expanded_table1}.

Two forces drive these results. First, basis risk reduces the buyer’s demand elasticity to price. For a given $\eta$, a unit increase in $u$ delivers a smaller variance reduction than under indemnity. This shifts the buyer’s best response downward and lowers the seller’s optimal $\eta^{*}$. Second, commitment and timing work as in the indemnity case. Static designs pre-commit both the hedge ratio and the fixed-leg forecasts at time 0, which provides predictable coverage that more risk-averse buyers value. Dynamic designs reset the hedge ratio and the fixed leg one step ahead, which limits long-horizon exposure and suits more risk-averse sellers. The joint effect is visible in Figure~\ref{fig:ut_traditional_index}. The dynamic $u^{*}_t$ remains close to a full hedge early when annual variance reduction dominates, then tapers as the horizon shortens and the incremental benefit falls relative to the expected cost at positive $\eta$.

\begin{figure}[htbp]%
    \centering
    \subfloat[Static contract]{\includegraphics[width=7.5cm]{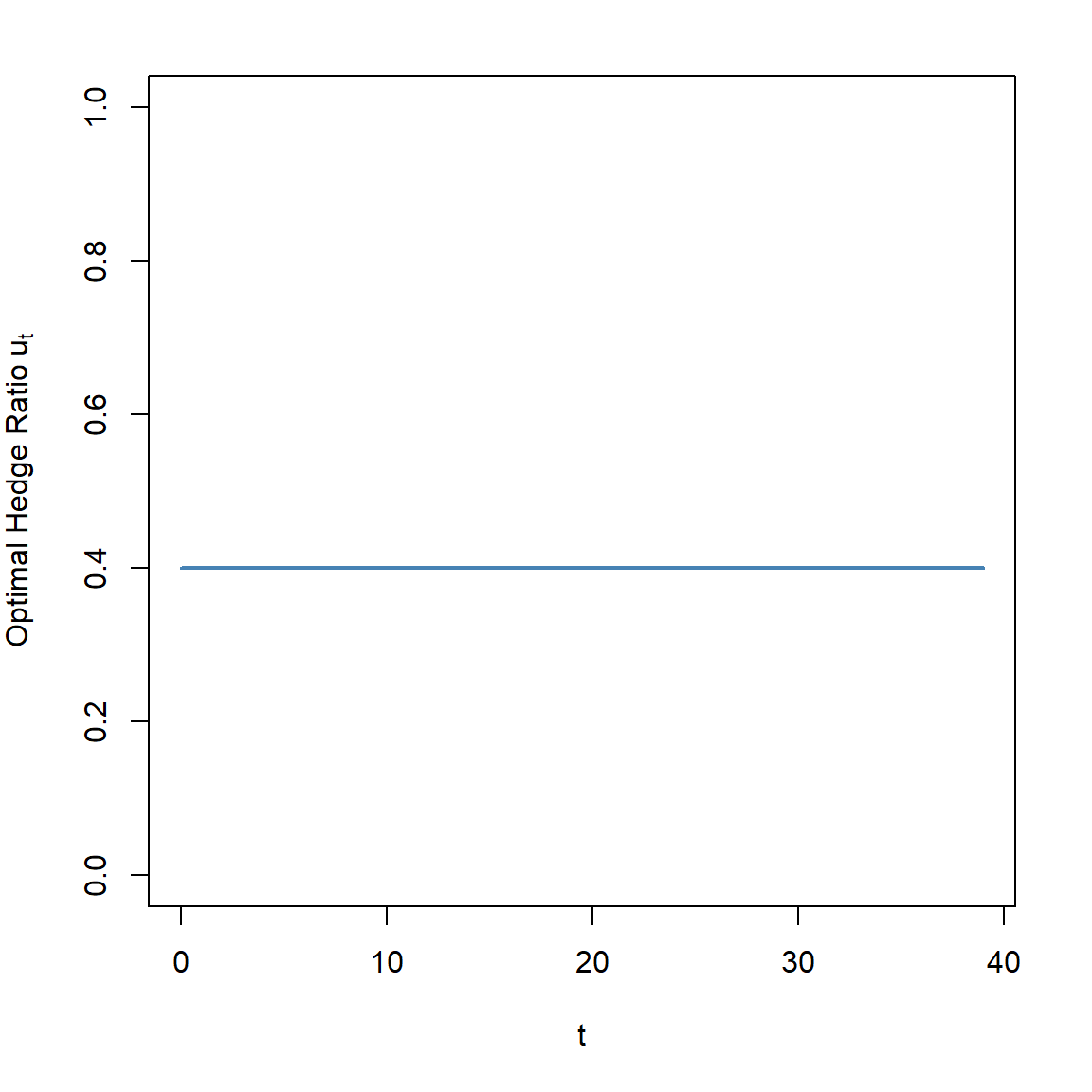} }%
    \qquad
    \subfloat[Dynamic contract]{\includegraphics[width=7.5cm]{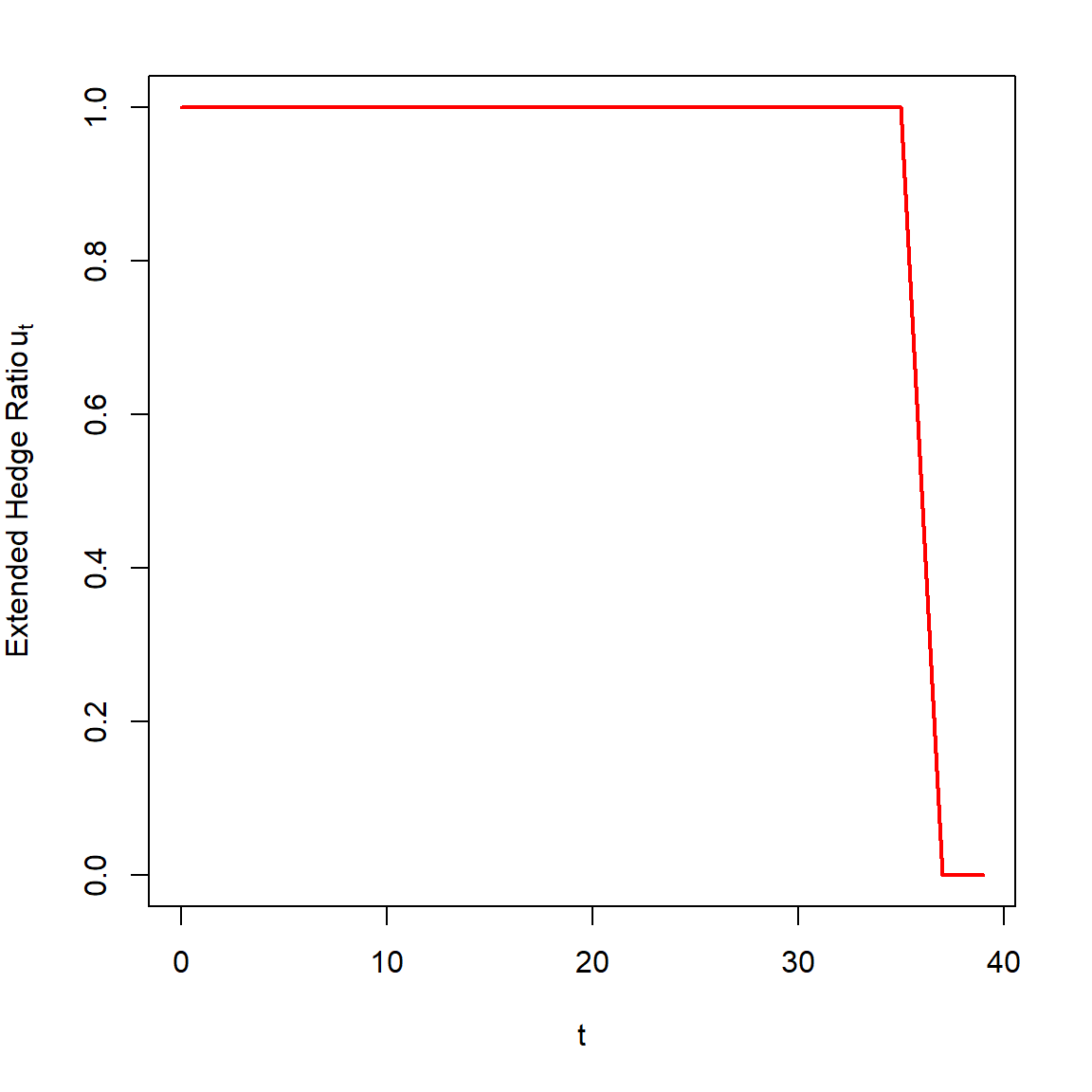} }
    \caption{Buyer’s optimal hedge ratio for static (left) and dynamic (right) \emph{index-based} contracts.}
    \label{fig:ut_traditional_index}%
\end{figure}

In practice, index-based structures can broaden participation by lowering information requirements and by standardizing exposure. The efficiency penalty from basis risk compresses surplus for both sides. To summarize, our analysis predicts lower loadings than indemnity for otherwise identical trades, similar qualitative hedge paths, and an unchanged ordering of static versus dynamic preferences across risk-aversion regimes.

\begin{table}[t]
\centering
\caption{Optimal loading $\eta^*$ and {objective} gains under indemnity vs.\ index-based swaps.}
\label{tab:expanded_table1}
\begin{tabular}{llccccccc}
\toprule
Risk aversion & Swap type
& \multicolumn{3}{c}{\textbf{Indemnity}} 
& \multicolumn{3}{c}{\textbf{Index-based}} \\
\cmidrule(lr){3-5} \cmidrule(lr){6-8}
& & $\eta^*$ & Seller gain & Buyer gain 
& $\eta^*$ & Seller gain & Buyer gain \\
\midrule
$(0.05,\,0.02)$ & Static  & 0.0639 & 33.56 & 21.56 & 0.0608 & 31.02 & 17.33 \\
$(0.05,\,0.02)$ & Dynamic & 0.0215 & 24.75 &  4.69 & 0.0167 & 22.91 &  3.62 \\
\addlinespace
$(0.05,\,0.20)$ & Static  & 0.0912 & 12.50 &  4.31 & 0.0790 &  10.86 &  5.95 \\
$(0.05,\,0.20)$ & Dynamic & 0.0234 & 14.03 &  5.28 & 0.0170 & 11.35 &  5.26 \\
\bottomrule
\end{tabular}
\end{table}
}

\section{Conclusion} \label{sec:conclusion}

This study develops an economic framework for longevity risk transfer between institutional buyers and sellers with different risk aversions. Comparing static (long‑dated, pre‑committed) and dynamic (short‑dated, rolled) swaps in a {Stackelberg sequential pricing framework}, we show that contract preferences follow differences in risk aversion: static contracts are preferred when the buyer is more risk averse, while dynamic contracts are preferred when the seller is more risk averse. These patterns help explain why long‑dated longevity swaps dominate in reinsurance, while short‑dated structures are more prevalent in capital markets.

In the capital‑market configuration, modeling seller‑side ambiguity confirms the robustness of dynamic designs and highlights the fragility of static trades: even moderate ambiguity can eliminate the static market while leaving dynamic contracts viable. Thus, while short‑term instruments do not fully match long‑horizon liabilities, they remain a practical route to expanding market capacity.

We also extend the analysis to index‑based swaps. Relative to indemnity swaps, optimal loadings are lower and objective gains are smaller for both parties, while the static–dynamic preference pattern is unchanged. Hence, while the use of index-based swaps can potentially broaden participation, is unlikely to overturn the contract-type preference shaped by risk aversion.

{Future work could relax the studied Stackelberg sequential pricing framework by allowing for bargaining, competitive intermediation, or endogenous contract menus.} It would also be interesting to compare additional instruments, such as index-based annuity forwards or option-type survivor derivatives, within the same framework, and to replace the variance penalty with a VaR/SCR objective more closely aligned with Solvency II practice.

\vspace{1cm}
\noindent \textbf{Acknowledgement} The authors acknowledge the support of the Natural Sciences and Engineering Research Council of Canada (NSERC), [RGPIN-04011] and [RGPIN-04338].

\bibliographystyle{elsarticle-harv}
\bibliography{reference}

\newpage 
\appendix
	
\noindent {\huge \textbf{Appendix}	}
	
\numberwithin{equation}{section}	
\numberwithin{figure}{section}

\section{Risk-Neutral Seller} \label{sec:risk_neutral_seller}

This section examines the robustness of our findings regarding risk aversion parameters by considering the case where the seller is risk-neutral ($\gamma_s=0$), while buyer's risk aversion $\gamma_b$ remains at 0.05. This setup indicates a larger difference in risk aversion between the two parties. 

Figure~\ref{fig:HP_no_ambiguity_risk_neutral} displays the objective gains for both parties under this configuration. Similar to observations from Figure~\ref{fig:HP_no_ambiguity}, trading the static contract yields significantly greater objective gains for both parties compared to the dynamic contract. Notably, when the seller is risk-neutral, these gains are even more pronounced than in scenarios where the seller is risk-averse.

\begin{figure}[htbp!]%
    \centering
    \subfloat[Seller's {objective} gain]{\includegraphics[width=7.5cm]{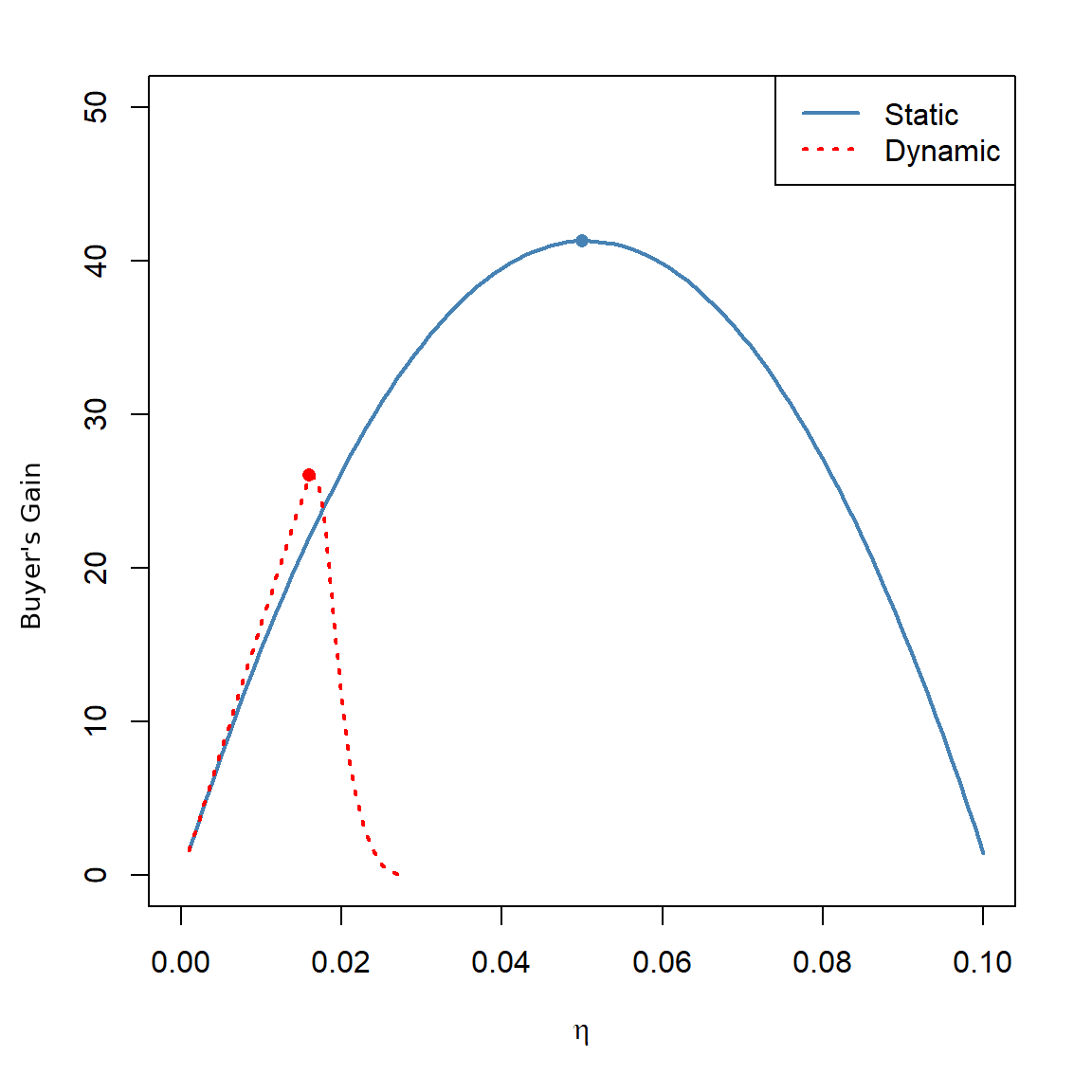} }%
    \qquad
    \subfloat[Buyer's {objective} gain]{{\includegraphics[width=7.5cm]{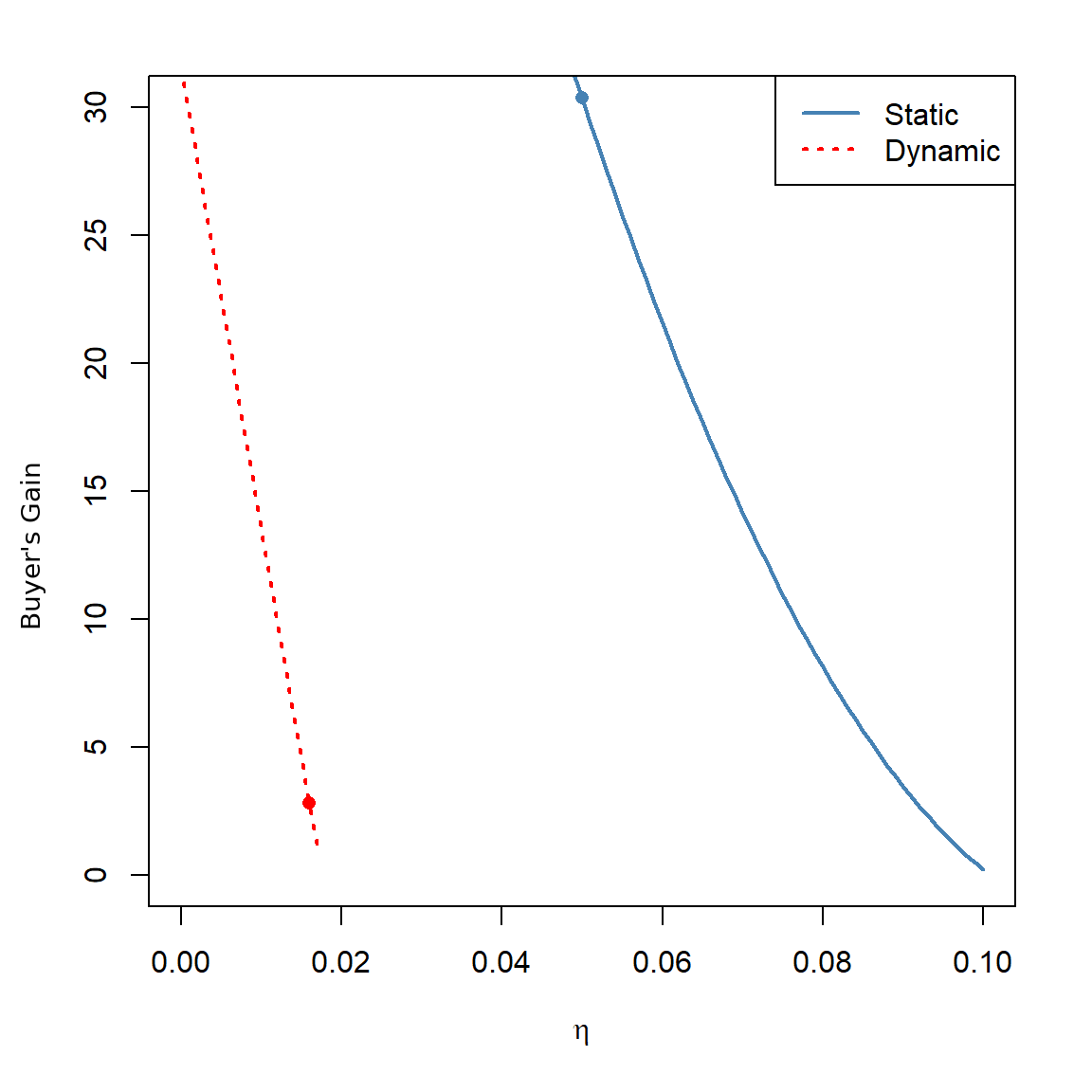} }}%
    \caption{{Objective} gain of the seller (left panel) and the buyer (right panel) when trading the static and the dynamic contract when seller is risk-neutral.}%
    \label{fig:HP_no_ambiguity_risk_neutral}%
\end{figure}

\section{Optimal Contracts in the Presence of Ambiguity}

This section discusses the derivation of optimal contract parameters in the presence of ambiguity for both the static (Section~\ref{App： Multi AB}) and the dynamic (Section~\ref{App： Multi C}) contract.

\subsection{Static Contract}\label{App： Multi AB}
\begin{theorem}\label{constant_u}
Suppose we are given the seller’s premium loading $\eta$, the buyer's optimal hedge ratio with the \textbf{static contract} equals:
\begin{align}\label{optimal u : A}
    u^{\ast}(\cdot;\eta,\lambda) = 0\vee\left(1 - \frac{C^{\lambda}_1}{\gamma_b D^{\lambda}_1}\right)\wedge 1,
\end{align}
where 
\begin{align}
    C^{\lambda}_1 &:= l_{0}\sum_{i=1}^{T} (1+r)^{T-i} \left((1+\eta){}_{i}\hat{\mathrm{p}}_{x}-{}_{i}\hat{\mathrm{p}}_{x}^{\lambda}\right),\\
    D^{\lambda}_1 &:=    \sum_{i=1}^{T}\sum_{j=1}^{T}(1+r)^{2T-(i+j)}\mathrm{Cov}^{\lambda}(l_{i},l_{j}).
\end{align}
 {The symbols $a\wedge b$ and $a\vee b$ denote $\min\{a,b\}$ and $\max\{a,b\}$, respectively.}
\end{theorem}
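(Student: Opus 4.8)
The plan is to solve the buyer's mean--variance optimization in \eqref{insurer objective_static ambiguity} directly, exploiting the fact that the static contract has a constant hedge ratio $u$, so the objective is a one-dimensional quadratic in $u$. First I would obtain a closed-form expression for the buyer's terminal surplus $B_T(u,\eta)$ by unrolling the recursion \eqref{eq:X}: iterating $B_t = B_{t-1}(1+r) - l_t + u\bigl(l_t - (1+\eta)\hat{l}_t\bigr)$ from $t=1$ to $T$ gives
\begin{equation*}
B_T(u,\eta) = B_0(1+r)^T - \sum_{i=1}^T (1+r)^{T-i} l_i + u\sum_{i=1}^T (1+r)^{T-i}\bigl(l_i - (1+\eta)\hat{l}_i\bigr),
\end{equation*}
where for the static contract $\hat{l}_i = {}_i\hat{\mathrm{p}}_x\, l_0$ is deterministic (determined at time $0$).

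Next I would compute the expectation and variance under the prior $\mathbb{Q}^\lambda$. Since $\hat{l}_i$ is a constant, the only randomness is in the $l_i$'s. Writing $L := \sum_{i=1}^T (1+r)^{T-i} l_i$, we have $B_T(u,\eta) = B_0(1+r)^T - L + u\bigl(L - (1+\eta)l_0\sum_i (1+r)^{T-i}{}_i\hat{\mathrm p}_x\bigr)$, so
\begin{align*}
\mathbb{E}^{\mathbb{Q}^\lambda}[B_T(u,\eta)] &= B_0(1+r)^T - \mathbb{E}^{\mathbb{Q}^\lambda}[L] + u\bigl(\mathbb{E}^{\mathbb{Q}^\lambda}[L] - (1+\eta)l_0\textstyle\sum_i (1+r)^{T-i}{}_i\hat{\mathrm p}_x\bigr),\\
\mathrm{Var}^{\mathbb{Q}^\lambda}[B_T(u,\eta)] &= (1-u)^2 \mathrm{Var}^{\mathbb{Q}^\lambda}[L] = (1-u)^2 \sum_{i=1}^T\sum_{j=1}^T (1+r)^{2T-(i+j)}\mathrm{Cov}^\lambda(l_i,l_j).
\end{align*}
Using $\mathbb{E}^{\mathbb{Q}^\lambda}[l_i] = {}_i\hat{\mathrm p}_x^{\,\lambda} l_0$ from \eqref{eq:probability_lambda} (the tower property over the binomial thinning structure gives $\mathbb{E}^{\mathbb{Q}^\lambda}[l_i] = l_0\prod \mathbb{E}^{\mathbb{Q}^\lambda}[\text{one-year probs}] = l_0\,{}_i\hat{\mathrm p}_x^{\,\lambda}$), the linear coefficient of $u$ in the expectation becomes $-C_1^\lambda$ with $C_1^\lambda$ as defined, and the variance coefficient is $D_1^\lambda$. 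So the objective is $(\text{const}) - u\,C_1^\lambda - \frac{\gamma_b}{2}(1-u)^2 D_1^\lambda$.

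Finally I would maximize this scalar quadratic over $u$. Setting the derivative to zero, $-C_1^\lambda + \gamma_b(1-u)D_1^\lambda = 0$, yields the interior critical point $u = 1 - C_1^\lambda/(\gamma_b D_1^\lambda)$; since $D_1^\lambda \geq 0$ (it is a variance) the objective is concave in $u$, so this is the unconstrained maximizer, and projecting onto the admissible interval $[0,1]$ gives the stated formula \eqref{optimal u : A} via the truncation $0 \vee (\cdot) \wedge 1$. The only mild subtlety --- and the step I would be most careful about --- is justifying $\mathbb{E}^{\mathbb{Q}^\lambda}[l_i] = l_0\,{}_i\hat{\mathrm p}_x^{\,\lambda}$, i.e., that the conditional binomial survival structure of $l_t$ given $l_{t-1}$ combines multiplicatively across years so that the prior-$\lambda$ expected headcount is $l_0$ times the $\lambda$-th power of the benchmark $i$-year survival probability; this follows from iterated expectations together with the defining relation \eqref{eq:probability_lambda} for the one-year probabilities, and is the one place where the ambiguity parametrization enters the computation rather than just bookkeeping. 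No positivity or boundary issues arise beyond $D_1^\lambda \ge 0$, which is automatic.
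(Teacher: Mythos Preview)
Your proposal is correct and follows essentially the same route as the paper's proof: unroll \eqref{eq:X} to get the closed-form terminal surplus, compute its mean and variance under $\mathbb{Q}^\lambda$, and apply the first-order condition to the resulting concave quadratic in $u$. Your discussion of the concavity check, the projection onto $[0,1]$, and the justification of $\mathbb{E}^{\mathbb{Q}^\lambda}[l_i]=l_0\,{}_i\hat{\mathrm p}_x^{\lambda}$ is slightly more explicit than the paper's, but the argument is the same.
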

\begin{proof}
    We consider a constant hedge ratio over the term of the contract. For the \textbf{static contract}, the fixed payments at time $t$, $\hat{l}_{t}$ is equal to ${}_{t}\hat{\mathrm{p}}_{x}l_{0}$ and the buyer's surplus with a constant hedging strategy $u_t = u \in [0,1]$. In this case, the wealth process $\{B_{t}(u)\}_{t\in \{1,2,...,T\}}$ in Equation \eqref{eq:X} leads to the following terminal wealth $B_T(u)$:
\begin{align}
B_{T}(u) &= B_{0}(1+r)^{T} - \sum_{i =1}^{T}\left(l_{i} + u \bigl(  (1+\eta) {}_{i}\hat{\mathrm{p}}_{x}l_{0} - l_{i} \bigr)\right)(1+r)^{T-i}. \label{eq:T  X}
\end{align}
We then calculate the expectation and variance of the terminal wealth $B_T(u)$ respectively as:
\begin{align}
\mathbb{E}^{\mathbb{Q}^{\lambda}}[B_T(u)] &= B_0(1+r)^{T} - \sum^{T}_{i=1}\left(l_0{}_{i}\hat{\mathrm{p}}_{x}^{\lambda} + ul_0((1+\eta){}_i\hat{\mathrm{p}}_x-{}_{i}\hat{\mathrm{p}}_{x}^{\lambda})\right)(1+r)^{T-i}, \label{exp: A}\\
\mathrm{Var}^{\mathbb{Q}^{\lambda}}[B_T(u)] &= (1-u)^2\sum^{T}_{i=1}\sum^{T}_{j=1}(1+r)^{2T-i-j}\mathrm{Cov}^{\lambda}(l_{i},l_{j}).\label{var: A}
\end{align}
 We then plug the expectation (\ref{exp: A}) and the variance (\ref{var: A}) into buyer's objective (\ref{insurer objective_static ambiguity}), which gives:
\begin{align*}
\sup_{u \in [0,1]}\bigg \{&  B_0(1+r)^{T} - \sum^{T}_{i=1}l_0{}_{i}\hat{\mathrm{p}}_{x}^{\lambda}(1+r)^{T-i} -u\sum^{T}_{i=1} l_0((1+\eta){}_i\hat{\mathrm{p}}_x-{}_{i}\hat{\mathrm{p}}_{x}^{\lambda})(1+r)^{T-i}  \notag \\
& - \frac{\gamma_b}{2}(1-u)^2\sum^{T}_{i=1}\sum^{T}_{j=1}(1+r)^{2T-i-j}\mathrm{Cov}^{\lambda}(l_{i},l_{j}) \bigg \} 
\end{align*}
From the first-order condition, we find the optimal strategy $u^{\ast}$ (\ref{optimal u : A}). 
\end{proof}

\subsection{Dynamic Contract}\label{App： Multi C}

\begin{definition}
We consider a fixed control law $u$ and make the following construction
\begin{itemize}
    \item Fix an arbitrary point $(t, b)$, where $t<T$, and choose an arbitrary control value $u \in [0,1]$
    \item Now define the control law $\mathbf{u}^{u, t}$ on the time set $\{t, t+1, \ldots, T-1\}$ by setting, for any $b \in \mathbb{R}$,
$$
\mathbf{u}_k^{u, t}(b)= \begin{cases}\mathbf{u}^{\ast}_k(b) & \text { for } k=t+1, \ldots, T-1 \\ u & \text { for } k=t\end{cases}
$$
\end{itemize}
We say that $\mathbf{u}^{\ast}$ is a subgame perfect Nash equilibrium strategy if for every fixed $(t, b)$, we have
$$
\sup _{u \in [0,1]} J_t\left(b, \mathbf{u}^{u, t};\eta,\lambda\right)=J_t\left(b, \mathbf{u}^{\ast};\eta,\lambda\right),
$$
where $J_t\left(b, \mathbf{u}^{\ast};\eta,\lambda\right) = \mathbb{E}_{t}^{\mathbb{Q}^{\lambda}}\left[ B_{T}({\mathbf{u}^{\ast}})\right] -\frac{\gamma_b}{2}\mathrm{Var}_{t}^{\mathbb{Q}^{\lambda}}\left[B_{T}({\mathbf{u}^{\ast}})\right]$.
If an equilibrium control $\mathbf{u}^{\ast}$ exists, we define the equilibrium value function $V$ by
$$
V_t(b;\eta,\lambda):=J_t(b, \mathbf{u}^{\ast};\eta,\lambda).
$$ 
\end{definition}

\begin{theorem}
    We define the function sequence $\left(g_t;\eta,\lambda \right)_{t=0}^T$, where $g_t: \mathbb{R} \rightarrow \mathbb{R}$, by:
$$
g_t(b;\eta,\lambda):=\mathbb{E}^{\mathbb{Q}^{\lambda}}_{t, b}\left[B_T(\mathbf{u}^{\ast})\right] .
$$
The value function $V_t$ of the buyer's problem satisfies the recursion:
\begin{align}
V_t(b;\eta,\lambda)= & \sup _{u \in [0,1]}\left\{\mathbb{E}^{\mathbb{Q}^{\lambda}}_{t, b}\left[V_{t+1}\left(B_{t+1}(u);\eta,\lambda\right)\right]+\frac{\gamma_b}{2}\left(\mathbb{E}^{\mathbb{Q}^{\lambda}}_{t, b}\left[g_{t+1}\left(B_{t+1}(u);\eta,\lambda\right)\right]\right)^2\right. \nonumber \\
& \left.-\frac{\gamma_b}{2} \mathbb{E}^{\mathbb{Q}^{\lambda}}_{t, b}\left[g_{t+1}^2\left(B_{t+1}(u);\eta,\lambda\right)\right]\right\}, \label{ehjb} \\
V_T(b;\eta,\lambda)= & 
b \nonumber
\end{align}
where the recursion for $g$ is given by:
\begin{align}\label{recursion: g}
g_t(b;\eta,\lambda)=\mathbb{E}^{\mathbb{Q}^{\lambda}}_{t, b}\left[g_{t+1}\left(B_{t+1}(\mathbf{u}^\ast);\eta,\lambda\right)\right], \quad g_T(b;\eta,\lambda)=b
\end{align}
\end{theorem}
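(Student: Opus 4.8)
The plan is to establish the two recursions by a backward induction on $t$, starting from the terminal condition $V_T(b;\eta,\lambda)=b$, which follows immediately since $B_T(\mathbf{u}^\ast)=b$ when we condition on the terminal wealth being $b$; the same reasoning gives $g_T(b;\eta,\lambda)=b$. For the inductive step I would fix $(t,b)$ with $t<T$ and unravel the mean--variance functional $J_t$ using the tower property of conditional expectation together with the law of total variance. Specifically, writing $\mathbb{E}^{\mathbb{Q}^\lambda}_{t,b}[\,\cdot\,]$ for conditioning on $B_t=b$, I would use
\begin{align*}
\mathbb{E}^{\mathbb{Q}^\lambda}_{t,b}\!\left[B_T(\mathbf{u})\right]
&=\mathbb{E}^{\mathbb{Q}^\lambda}_{t,b}\!\left[\mathbb{E}^{\mathbb{Q}^\lambda}_{t+1,B_{t+1}(u)}\!\left[B_T(\mathbf{u})\right]\right],\\
\mathrm{Var}^{\mathbb{Q}^\lambda}_{t,b}\!\left[B_T(\mathbf{u})\right]
&=\mathbb{E}^{\mathbb{Q}^\lambda}_{t,b}\!\left[\mathrm{Var}^{\mathbb{Q}^\lambda}_{t+1,B_{t+1}(u)}\!\left[B_T(\mathbf{u})\right]\right]
+\mathrm{Var}^{\mathbb{Q}^\lambda}_{t,b}\!\left[\mathbb{E}^{\mathbb{Q}^\lambda}_{t+1,B_{t+1}(u)}\!\left[B_T(\mathbf{u})\right]\right].
\end{align*}

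Next I would specialize to the control law $\mathbf{u}^{u,t}$, which uses the arbitrary value $u$ at time $t$ and then follows the candidate equilibrium $\mathbf{u}^\ast$ from $t+1$ onward. Under this law, the inner conditional expectation $\mathbb{E}^{\mathbb{Q}^\lambda}_{t+1,B_{t+1}(u)}[B_T(\mathbf{u}^{u,t})]$ equals $g_{t+1}(B_{t+1}(u);\eta,\lambda)$ by the definition of $g_{t+1}$, and the inner conditional mean--variance value equals $V_{t+1}(B_{t+1}(u);\eta,\lambda)$, i.e. $\mathbb{E}^{\mathbb{Q}^\lambda}_{t+1,\cdot}[B_T]-\tfrac{\gamma_b}{2}\mathrm{Var}^{\mathbb{Q}^\lambda}_{t+1,\cdot}[B_T]=V_{t+1}$, so $\mathrm{Var}^{\mathbb{Q}^\lambda}_{t+1,\cdot}[B_T]=\tfrac{2}{\gamma_b}\big(g_{t+1}-V_{t+1}\big)$. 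Substituting these into the law-of-total-variance decomposition and rewriting the outer variance as $\mathbb{E}^{\mathbb{Q}^\lambda}_{t,b}[g_{t+1}^2]-(\mathbb{E}^{\mathbb{Q}^\lambda}_{t,b}[g_{t+1}])^2$ yields, after collecting terms,
\begin{align*}
J_t(b,\mathbf{u}^{u,t};\eta,\lambda)
=\mathbb{E}^{\mathbb{Q}^\lambda}_{t,b}\!\left[V_{t+1}(B_{t+1}(u))\right]
+\frac{\gamma_b}{2}\Big(\mathbb{E}^{\mathbb{Q}^\lambda}_{t,b}\!\left[g_{t+1}(B_{t+1}(u))\right]\Big)^{2}
-\frac{\gamma_b}{2}\mathbb{E}^{\mathbb{Q}^\lambda}_{t,b}\!\left[g_{t+1}^2(B_{t+1}(u))\right],
\end{align*}
where I have suppressed the $(\eta,\lambda)$ arguments. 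Taking $\sup_{u\in[0,1]}$ of both sides and invoking the definition of a subgame perfect Nash equilibrium — namely $\sup_{u}J_t(b,\mathbf{u}^{u,t})=J_t(b,\mathbf{u}^\ast)=V_t(b)$ — gives exactly the claimed recursion \eqref{ehjb} for $V_t$. The recursion \eqref{recursion: g} for $g_t$ is then just the tower property applied to $g_t(b)=\mathbb{E}^{\mathbb{Q}^\lambda}_{t,b}[B_T(\mathbf{u}^\ast)]=\mathbb{E}^{\mathbb{Q}^\lambda}_{t,b}[\mathbb{E}^{\mathbb{Q}^\lambda}_{t+1,B_{t+1}(\mathbf{u}^\ast)}[B_T(\mathbf{u}^\ast)]]=\mathbb{E}^{\mathbb{Q}^\lambda}_{t,b}[g_{t+1}(B_{t+1}(\mathbf{u}^\ast))]$, with terminal value $g_T(b)=b$.

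The step I expect to be the main obstacle — really the only subtle point — is the careful bookkeeping around the equilibrium control $\mathbf{u}^{u,t}$: one must verify that applying $\mathbf{u}^\ast$ from time $t+1$ onward makes the inner (time-$(t+1)$) problem genuinely coincide with the equilibrium value function $V_{t+1}$ and the auxiliary function $g_{t+1}$, which relies on the Markovian structure of the wealth dynamics $B_{t+1}(u)=B_t(1+r)-l_{t+1}+u\,(l_{t+1}-(1+\eta)\hat l_{t+1})$ under $\mathbb{Q}^\lambda$ and on the fact that the binomial transition of $l_{t+1}$ given $\mathcal{F}_t$ depends on the past only through $l_t$ (equivalently, that $(B_t,l_t)$, or after the deterministic reindexing just $B_t$, is the relevant state). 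Once the state structure is pinned down, the algebra of completing the law-of-total-variance identity into the $V_{t+1}$-plus-correction form is routine. I would also note in passing that this is precisely the discrete-time extended HJB system of \cite{bjork2010general} adapted to our setting, so the argument is a specialization of their general verification rather than anything new in principle.
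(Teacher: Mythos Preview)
Your derivation is correct: the tower property plus the law of total variance, specialized to the perturbed control $\mathbf{u}^{u,t}$, gives exactly the right-hand side of \eqref{ehjb}, and the $g$-recursion is an immediate consequence of iterated expectations. The paper, however, does not carry out this computation. It simply observes that the mean--variance objective can be written as $\mathbb{E}^{\mathbb{Q}^\lambda}_{t,b}[G(B_T)]+H(\mathbb{E}^{\mathbb{Q}^\lambda}_{t,b}[B_T])$ with $G(x)=x-\tfrac{\gamma_b}{2}x^2$ and $H(x)=\tfrac{\gamma_b}{2}x^2$, and then invokes Theorem~3.5 of \cite{bjork2014theory} to obtain the extended Bellman system directly. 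So the paper's proof is a one-line citation, while yours is a self-contained derivation of the same recursion from first principles; your argument is essentially the specialization of the Bj\"ork--Murgoci proof to this $G,H$ pair, which you yourself acknowledge at the end. What your route buys is transparency about \emph{why} the $g^2$ correction terms appear (they are exactly the variance-of-conditional-expectation piece in the total-variance decomposition), and you correctly flag the one genuine subtlety the paper glosses over: the state is really $(b,l_t)$, not $b$ alone, since $l_t$ enters the transition and indeed appears explicitly in the affine ansatz of the subsequent theorem.
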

\begin{proof}
We rewrite the buyer's objective (\ref{insurer objective ambiguity}) as:
\begin{align}
    \mathbb{E}^{\mathbb{Q}^{\lambda}}_{t,b}\left[G\left( B_T(\mathbf{u})\right)\right]+H\left( \mathbb{E}^{\mathbb{Q}^{\lambda}}_{t,b}\left[B_T(\mathbf{u})\right]\right),
\end{align}
where $G(x)=x-\frac{\gamma_b}{2} x^2, \quad H(x)=\frac{\gamma_b}{2} x^2$ . We can now apply the Theorem 3.5 in (\cite{bjork2014theory}). After
straightforward calculations, the extended Bellman equation reduces to the equation (\ref{ehjb}). 
\end{proof}
\begin{theorem}\label{mul_u}
    For $t=0,1, \ldots, T-1$, the equilibrium reinsurance strategy $u^{*}(\cdot;\eta,\lambda)$ for (\ref{insurer objective ambiguity}) is given by:
\begin{align}
    u_t^{\ast}(\cdot;\eta,\lambda)&=
    0\vee\left(1-\frac{(1+\eta)\hat{\mathrm{p}}_{x+t}-\hat{\mathrm{p}}_{x+t}^{\lambda} }{\gamma_b(1+r)^{T-(t+1)}(1-\hat{\mathrm{p}}_{x+t}^{\lambda} )\hat{\mathrm{p}}_{x+t}^{\lambda} } -\frac{f^{\lambda}_{t+1}}{(1+r)^{T-(t+1)}} \right)\wedge 1.
   \label{equ l} 
\end{align}
The corresponding value functions are given by
\begin{equation}
\left\{\begin{array}{l}
V_t(b;\eta,\lambda)=(1+r)^{T-t} b +F^{\lambda}_t l_t \\
g_t(b;\eta,\lambda)=(1+r)^{T-t}b+f^{\lambda}_tl_t
\end{array}\right.
\end{equation}
where the coefficients $F^{\lambda}_t$ and $f^{\lambda}_t$ are determined recursively by
\begin{align*}
   F^{\lambda}_{t} &= \hat{\mathrm{p}}_{x+t}^{\lambda} f^{\lambda}_{t+1}- (1+r)^{T-(t+1)}(\hat{\mathrm{p}}_{x+t}^{\lambda} +u^{\ast}_t((1+\eta)\hat{\mathrm{p}}_{x+t} - \hat{\mathrm{p}}_{x+t}^{\lambda}  ) )\\
    &-\frac{\gamma_b}{2}((u^{*}_t-1)(1+r)^{T-(t+1)}+f^{\lambda}_{t+1})^2(1-\hat{\mathrm{p}}_{x+t}^{\lambda} )\hat{\mathrm{p}}_{x+t}^{\lambda} ,\\
   f^{\lambda}_{t} &= \hat{\mathrm{p}}_{x+t}^{\lambda} f^{\lambda}_{t+1}- (1+r)^{T-(t+1)}(\hat{\mathrm{p}}_{x+t}^{\lambda} +u^{\ast}_t((1+\eta)\hat{\mathrm{p}}_{x+t} - \hat{\mathrm{p}}_{x+t}^{\lambda}  ) ),
\end{align*}
with boundary conditions
\begin{align*}
  F^{\lambda}_{T} = 0 \quad \text{and} \quad f^{\lambda}_{T} = 0. 
\end{align*}
\end{theorem}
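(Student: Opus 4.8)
The plan is to prove the theorem by backward induction on $t$, verifying that the proposed affine value functions satisfy the recursions \eqref{ehjb} and \eqref{recursion: g} from the preceding theorem. The base case $t=T$ is immediate: the terminal conditions $V_T(b;\eta,\lambda)=g_T(b;\eta,\lambda)=b$ agree with $(1+r)^{T-T}b+F^{\lambda}_T l_T$ and $(1+r)^{T-T}b+f^{\lambda}_T l_T$ once $F^{\lambda}_T=f^{\lambda}_T=0$. Fixing $t<T$ and assuming the stated forms at $t+1$, I propagate them to $t$.

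The central computation is to evaluate the bracket in \eqref{ehjb}. From \eqref{eq:X} with the dynamic-contract payment $\hat{l}_{t+1}=\hat{\mathrm{p}}_{x+t}l_t$, one has $B_{t+1}(u)=(1+r)b+(u-1)l_{t+1}-u(1+\eta)\hat{\mathrm{p}}_{x+t}l_t$, affine in the only randomness revealed at $t+1$, namely $l_{t+1}$. Under the prior $\mathbb{Q}^{\lambda}$, the one-step law of $l_{t+1}$ given $\mathcal{F}_t$ is binomial with mean $\hat{\mathrm{p}}_{x+t}^{\lambda}l_t$ and variance $\hat{\mathrm{p}}_{x+t}^{\lambda}(1-\hat{\mathrm{p}}_{x+t}^{\lambda})l_t$, where $\hat{\mathrm{p}}_{x+t}^{\lambda}=(\hat{\mathrm{p}}_{x+t})^{\lambda}$. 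Inserting the induction hypothesis, $\mathbb{E}^{\mathbb{Q}^{\lambda}}_{t,b}[V_{t+1}(B_{t+1}(u))]$ is affine in $u$, while $\frac{\gamma_b}{2}\bigl((\mathbb{E}^{\mathbb{Q}^{\lambda}}_{t,b}[g_{t+1}(B_{t+1}(u))])^2-\mathbb{E}^{\mathbb{Q}^{\lambda}}_{t,b}[g_{t+1}^2(B_{t+1}(u))]\bigr)=-\frac{\gamma_b}{2}\mathrm{Var}^{\mathbb{Q}^{\lambda}}_{t,b}\bigl(g_{t+1}(B_{t+1}(u))\bigr)=-\frac{\gamma_b}{2}\bigl((u-1)(1+r)^{T-(t+1)}+f^{\lambda}_{t+1}\bigr)^2\hat{\mathrm{p}}_{x+t}^{\lambda}(1-\hat{\mathrm{p}}_{x+t}^{\lambda})l_t$. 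Hence the bracket has the form $(1+r)^{T-t}b+l_t\,\Phi_t(u)$ with $\Phi_t$ a strictly concave quadratic in $u$ (leading coefficient $-\frac{\gamma_b}{2}(1+r)^{2(T-(t+1))}\hat{\mathrm{p}}_{x+t}^{\lambda}(1-\hat{\mathrm{p}}_{x+t}^{\lambda})<0$). The linearity of the $l_t$-dependence, essential for the ansatz to reproduce itself, relies on the binomial variance scaling with $l_t$ rather than $l_t^2$, and on $\hat{l}_{t+1}$ being $\mathcal{F}_t$-measurable.

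It then remains to solve $\sup_{u\in[0,1]}\Phi_t(u)$. By strict concavity the constrained maximizer is the truncation to $[0,1]$ of the unconstrained critical point; setting the derivative of $\Phi_t$ to zero yields precisely the interior expression in \eqref{equ l}, and truncation produces $u_t^{\ast}(\cdot;\eta,\lambda)$. Re-substituting $u=u_t^{\ast}$, the coefficient of $b$ in the bracket is $(1+r)^{T-t}$, confirming the slope of $V_t$, and the coefficient of $l_t$ is $\Phi_t(u_t^{\ast})$, which rearranges to the stated recursion for $F^{\lambda}_t$. Performing the same substitution in the variance-free recursion \eqref{recursion: g}, whose right-hand side is linear in its ingredients, gives $g_t(b;\eta,\lambda)=(1+r)^{T-t}b+f^{\lambda}_t l_t$ with the stated recursion for $f^{\lambda}_t$. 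This closes the induction and yields the value functions and the equilibrium strategy simultaneously.

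I expect the main difficulty to be organizational rather than conceptual: one must keep the $b$-linear and $l_t$-linear pieces separated at each step so that the affine ansatz is reproduced exactly, and one must check that $\Phi_t$ is genuinely strictly concave (which needs $\gamma_b>0$ and $0<\hat{\mathrm{p}}_{x+t}^{\lambda}<1$) in order to conclude that the clipped critical point is the constrained optimum. A secondary point requiring care is the bookkeeping of the ambiguity parameter: it enters only through $\hat{\mathrm{p}}_{x+t}^{\lambda}$ in the actual transition law and variance, whereas the fixed leg of the swap keeps the benchmark $\hat{\mathrm{p}}_{x+t}$, and these two should not be conflated in the first-order condition.
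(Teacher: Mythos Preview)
Your proposal is correct and follows essentially the same approach as the paper: posit the affine ansatz for $V_t$ and $g_t$, substitute into the extended Bellman system \eqref{ehjb}--\eqref{recursion: g}, reduce the bracket to a concave quadratic in $u$, extract the optimizer via the first-order condition with truncation, and identify coefficients to obtain the recursions. Your framing as backward induction (rather than ansatz-and-verify) and your explicit remarks on why the binomial variance scales linearly in $l_t$ and on distinguishing $\hat{\mathrm{p}}_{x+t}$ from $\hat{\mathrm{p}}_{x+t}^{\lambda}$ are welcome clarifications, but the route is the same as the paper's.
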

\begin{proof}
We assume the following ansatz,
\begin{align}
& V_t(b;\eta,\lambda)=A_t b+F^{\lambda}_{t}l_t, \label{ansatz: V} \\
& g_t(b;\eta,\lambda)=a_t b+f^{\lambda}_{t}l_t.\label{ansatz: g}
\end{align}
Using this ansatz and the
dynamics for $B_T$ (\ref{eq:X}), the extended Bellman equation (\ref{ehjb})reduces considerably, and we obtain
\begin{align*}
    A_t b+F^{\lambda}_{t} l_t&=\sup _{u \in [0,1]}\bigg\{A_{t+1} \left((1+r) b - l_t\hat{\mathrm{p}}_{x+t}^{\lambda} -ul_t\left((1+\eta)\hat{\mathrm{p}}_{x+t}-\hat{\mathrm{p}}_{x+t}^{\lambda} \right) \right)+F^{\lambda}_{t+1}\hat{\mathrm{p}}_{x+t}^{\lambda} l_t \\ 
    &-\frac{\gamma_b}{2} \left((u-1)a_{t+1}+f^{\lambda}_{t+1}\right)^2l_t(1-\hat{\mathrm{p}}_{x+t}^{\lambda} )\hat{\mathrm{p}}_{x+t}^{\lambda} \bigg\}.
\end{align*}
The above expression is concave with respect to $u$. Thus, we
obtain the optimal value of $u$ from the first-order condition. Specifically,
\begin{align*}
     u_t^{\ast}(\cdot;\eta,\lambda)&=
    0\vee\left(1-\frac{A_{t+1}\left((1+\eta)\hat{\mathrm{p}}_{x+t}-\hat{\mathrm{p}}_{x+t}^{\lambda} \right)}{\gamma_ba^2_{t+1}(1-\hat{\mathrm{p}}_{x+t}^{\lambda} )\hat{\mathrm{p}}_{x+t}^{\lambda} } -\frac{f^{\lambda}_{t+1}}{a_{t+1}} \right)\wedge 1.
\end{align*}
and inserting this into the equation above, we obtain  $A_t$ and the recursion formula for $B_t$,
\begin{align}
    A_t &= (1+r)^{T-t},\\ \nonumber
     F^{\lambda}_{t} &= -(1+r)^{T-(t+1)}\left( \hat{\mathrm{p}}_{x+t}^{\lambda} +u^{\ast}_t\left((1+\eta)\hat{\mathrm{p}}_{x+t} - \hat{\mathrm{p}}_{x+t}^{\lambda} \right)\right) +  F^{\lambda}_{t+1}\hat{\mathrm{p}}_{x+t}^{\lambda}  \\ \nonumber &-\frac{\gamma_b}{2} \left((u^{\ast}_t-1)a_{t+1}+f^{\lambda}_{t+1}\right)^2(1-\hat{\mathrm{p}}_{x+t}^{\lambda} )\hat{\mathrm{p}}_{x+t}^{\lambda} . 
\end{align}
We plug the ansatz
(\ref{ansatz: g}) and the previously derived expression for $u^{\ast}$ into the recursion (\ref{recursion: g}),
\begin{align*}
    a_t b + f^{\lambda}_{t}l_t = a_{t+1}\left((1+r)b -\hat{\mathrm{p}}_{x+t}^{\lambda} l_t-u^{\ast}_t\left((1+\eta)\hat{\mathrm{p}}_{x+t}-\hat{\mathrm{p}}_{x+t}^{\lambda} l_t\right)\right) + f^{\lambda}_{t+1}\hat{\mathrm{p}}_{x+t}^{\lambda} l_t.
\end{align*}
After identifying coefficients, this gives us the recursions,
\begin{align}
    a_t &= (1+r)^{T-t},\\ \nonumber
   f^{\lambda}_{t} &= \hat{\mathrm{p}}_{x+t}^{\lambda} f^{\lambda}_{t+1}- (1+r)^{T-(t+1)}\left(\hat{\mathrm{p}}_{x+t}^{\lambda} +u^{\ast}_t\left((1+\eta)\hat{\mathrm{p}}_{x+t} - \hat{\mathrm{p}}_{x+t}^{\lambda}  \right) \right). 
\end{align}
\end{proof}

\section{Two-Population Gravity APC Model and Simulation}\label{app:index_apc}

This section introduces the two-population APC model \citep{dowd2021hedging} used in Section~\ref{sec:indexbase}. For more detailed descriptions of the model and the estimation procedure, we refer the reader to \citep{dowd2021hedging}. We choose the U.S. uni-sex mortality data for the reference population ($i=1$), and the U.K. uni-sex mortality data for the annuity population ($i=2$). The data for both populations are sourced from the Human Mortality Database, using one-age, one-year death rates for ages 20 to 100 and years 1956 to 2020.

For two populations $i\in\{1,2\}$ with ages $x$ and calendar years $h$ (cohort $c=h-x$), let $m^{(i)}_{x,h}$ denote the central death rate. The APC observation equation is
\begin{equation}
\ln m^{(i)}_{x,h} = a_x + \beta_x\,\kappa^{(i)}_h + \gamma^{(i)}_c,\qquad c=h-x,\ i\in\{1,2\},
\end{equation}
with constraints $\sum_h\kappa^{(i)}_h=0$, $\sum_c\gamma^{(i)}_c=0$, and $\sum_x\beta_x=1$ for identifiability. Parameters are estimated by maximum likelihood under these constraints.

Let the period effects follow (discrete‑time) random walks with drift and a gravity pull from population 1 (reference) to 2 (annuity):
\begin{align}
\kappa^{(1)}_h &= \kappa^{(1)}_{h-1} + \mu^{(1)}_\kappa + \sigma^{(1)}_\kappa Z^{(\kappa,1)}_h, \notag \\
\kappa^{(2)}_h &= \kappa^{(2)}_{h-1} + \mu^{(2)}_\kappa + \phi_\kappa\big(\kappa^{(1)}_{h-1}-\kappa^{(2)}_{h-1}\big) + \sigma_{\kappa,21} Z^{(\kappa,1)}_h + \sigma_{\kappa,22} Z^{(\kappa,2)}_h, \label{eq:kappa_2}
\end{align}
with $Z^{(\kappa,1)}_h,Z^{(\kappa,2)}_h\stackrel{\text{i.i.d.}}\sim N(0,1)$, independent over $h$, and $\phi_\kappa\in[0,1)$. Similar to the single-population APCI model, the cohort effects do not need to be projected for our simulation study.

The gravity parameter $(\phi_\kappa)$ pull the smaller population’s period effect toward the larger population’s period effect, capturing long‑run co‑movement with short‑run deviations; the cross‑loadings $(\sigma_{\kappa,21},\sigma_{\kappa,22})$ allow instantaneous innovation correlation. Similar to the single-population model, future death rates, $\ln m^{(i)}_{x,h}$, will be simulated after estimating the model and simulating future $\kappa^{(1)}$'s and $\kappa^{(2)}$'s, and cohort survival probabilities will be simulated via the standard exponential link $_{t-s}p^{(i)}_{x+s}\approx \exp\{-\sum_{\mu=0}^{t-s} m^{(i)}_{x+s+\mu,\,s+\mu}\}$ for $i\in\{1,2\}$. 

\section{Mortality Model Estimation Results}

This section reports the estimated factors in both the single-population APCI model and the two-population gravity model, as well as the estimated parameters in the period-effect dynamics.

\subsection{Age-Period-Cohort-Improvement model}

Figure~\ref{fig:apci_components_single} displays the estimated factors in the single-population APCI model. The estimated $\hat{\sigma}_{\kappa}$ in Equation~\eqref{eq:kappa} is 0.02.

\begin{figure}[H]
\centering
\captionsetup[subfigure]{labelformat=parens,labelsep=space,font=small}
\begin{subfigure}{0.48\textwidth}
    \centering
    \includegraphics[width=\linewidth]{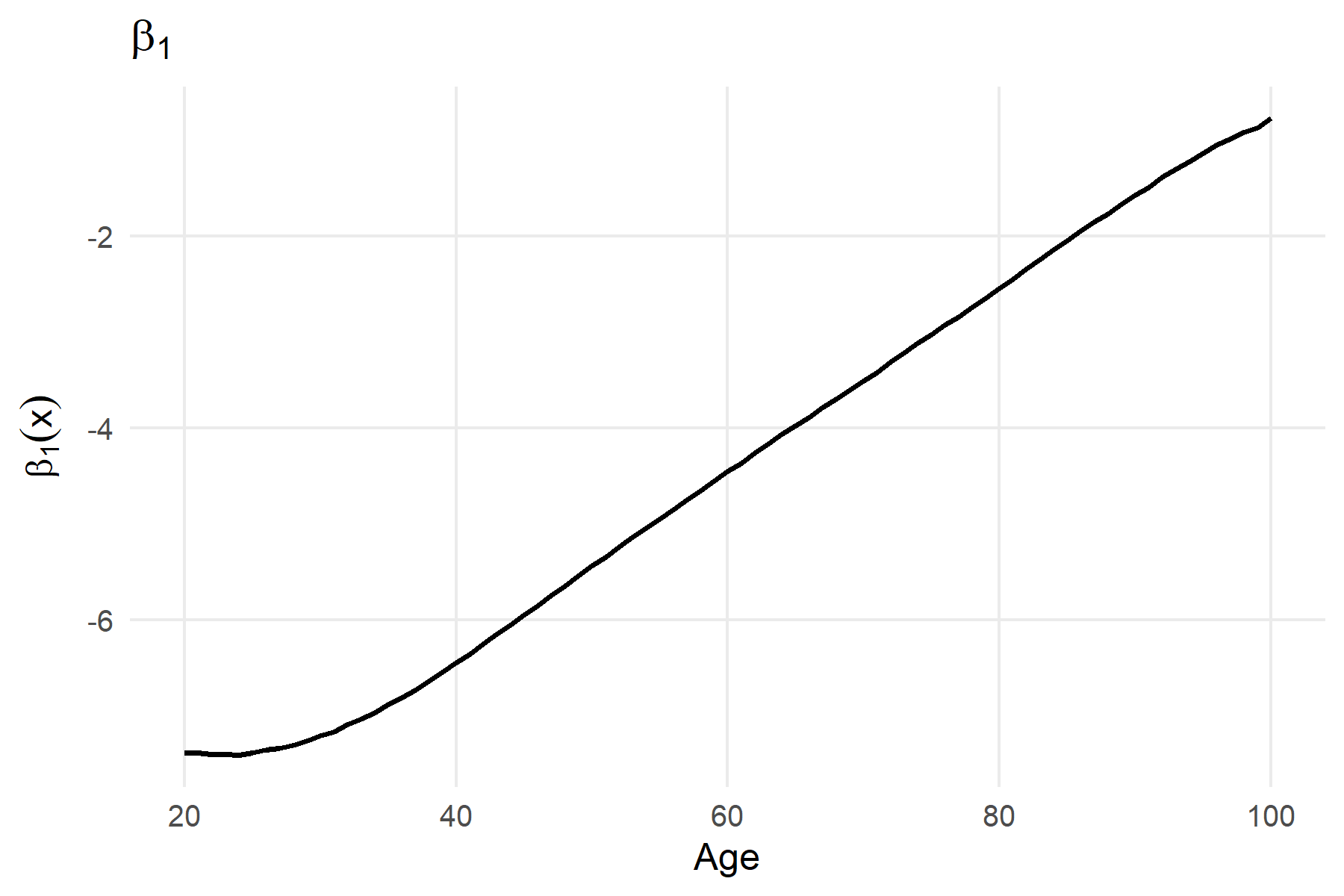}
    \caption{Age effect $\beta_{1}(x)$}
    \label{fig:beta1}
\end{subfigure}\hfill
\begin{subfigure}{0.48\textwidth}
    \centering
    \includegraphics[width=\linewidth]{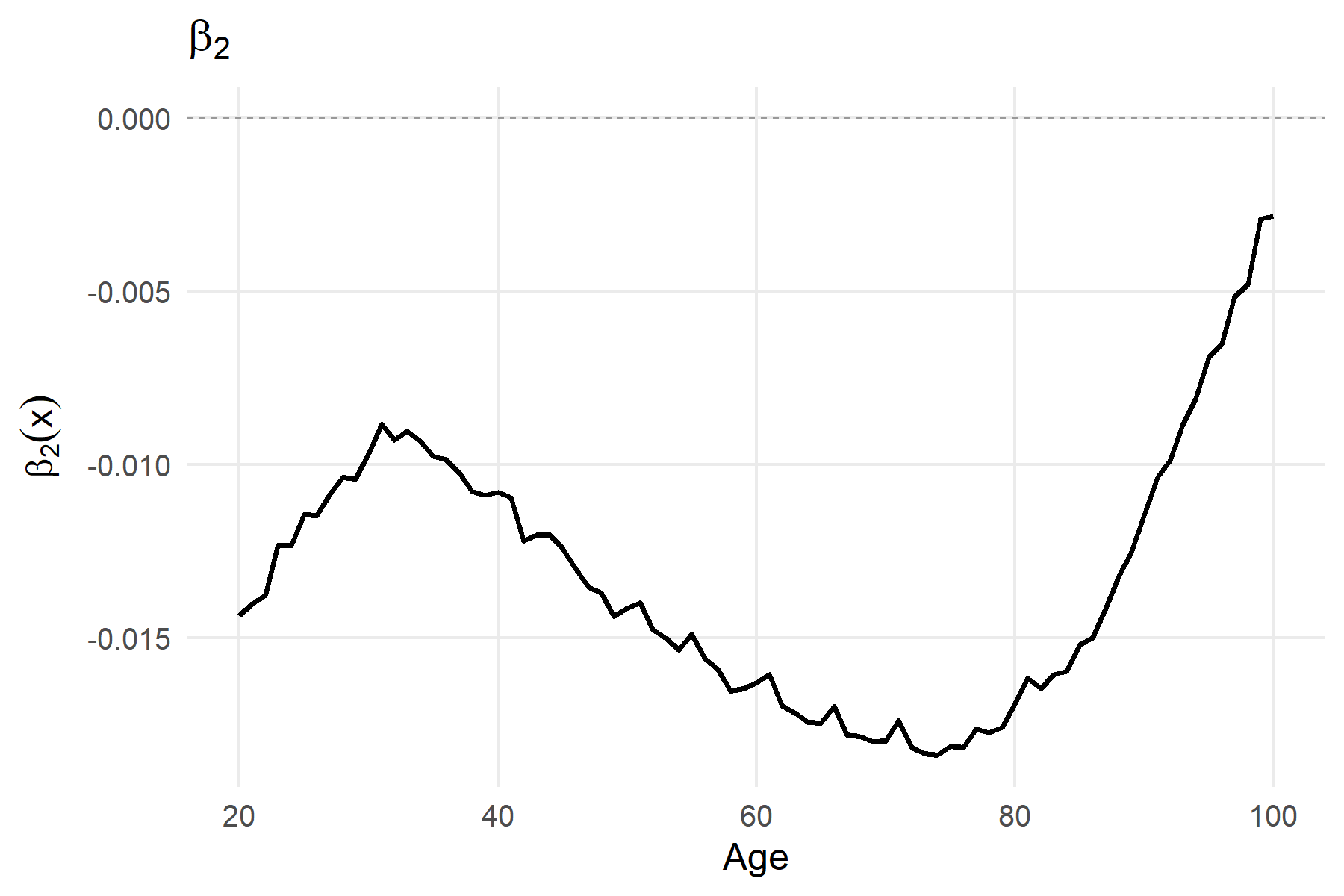}
    \caption{Age loading $\beta_{2}(x)$}
    \label{fig:beta2}
\end{subfigure}
\begin{subfigure}{0.48\textwidth}
    \centering
    \includegraphics[width=\linewidth]{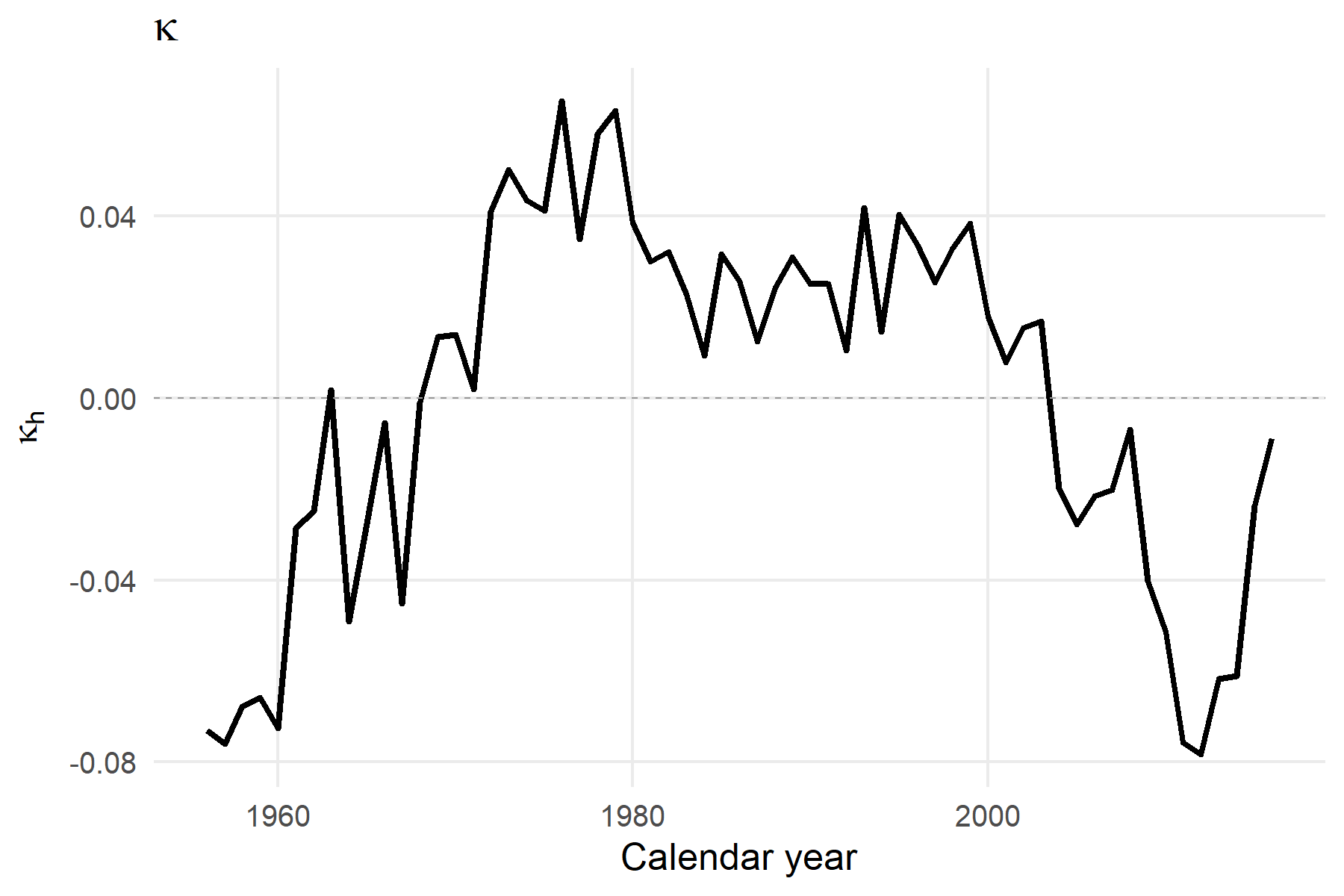}
    \caption{Period effect $\kappa_h$}
    \label{fig:kappa_single}
\end{subfigure}\hfill
\begin{subfigure}{0.48\textwidth}
    \centering
    \includegraphics[width=\linewidth]{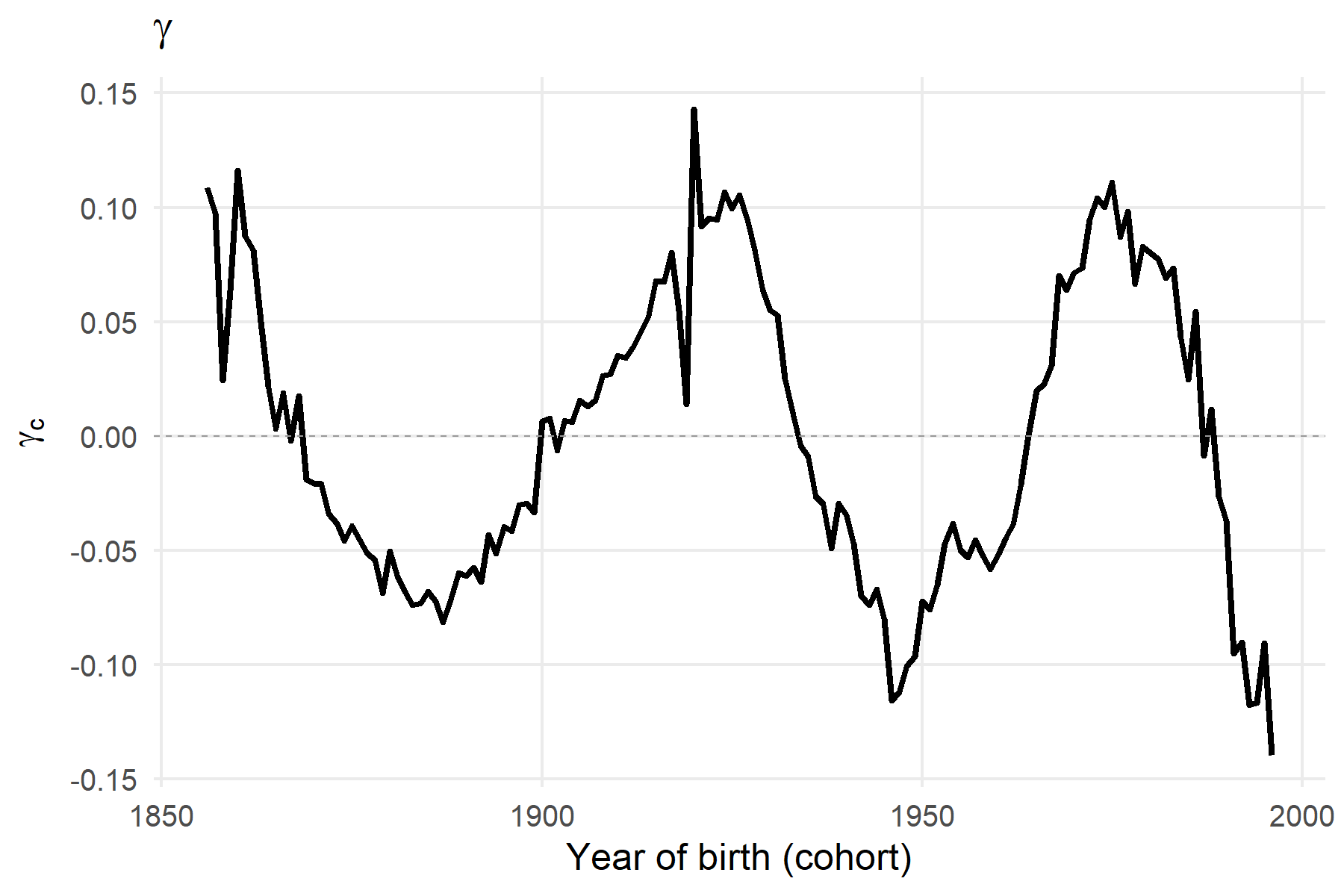}
    \caption{Cohort effect $\gamma_c$}
    \label{fig:gamma_single}
\end{subfigure}
\caption{Estimated components of the single-population APCI model.
Subfigures (a) and (b) display the estimated age effects $\beta_1(x)$ and $\beta_2(x)$,
while (c) and (d) present the period and cohort factors $\kappa_h$ and $\gamma_c$.}
\label{fig:apci_components_single}
\end{figure}

\subsection{Two-population gravity model}

Figure~\ref{fig:apc_components} shows the estimated factors in the two-population gravity model, and Table~\ref{tab:grav_params_compact} summarizes the parameters of the $\kappa$ dynamics in Equation~\eqref{eq:kappa_2}.

\begin{figure}[H]
\centering
\captionsetup[subfigure]{labelformat=parens,labelsep=space,font=small}
\begin{subfigure}{0.48\textwidth}
    \centering
    \includegraphics[width=\linewidth]{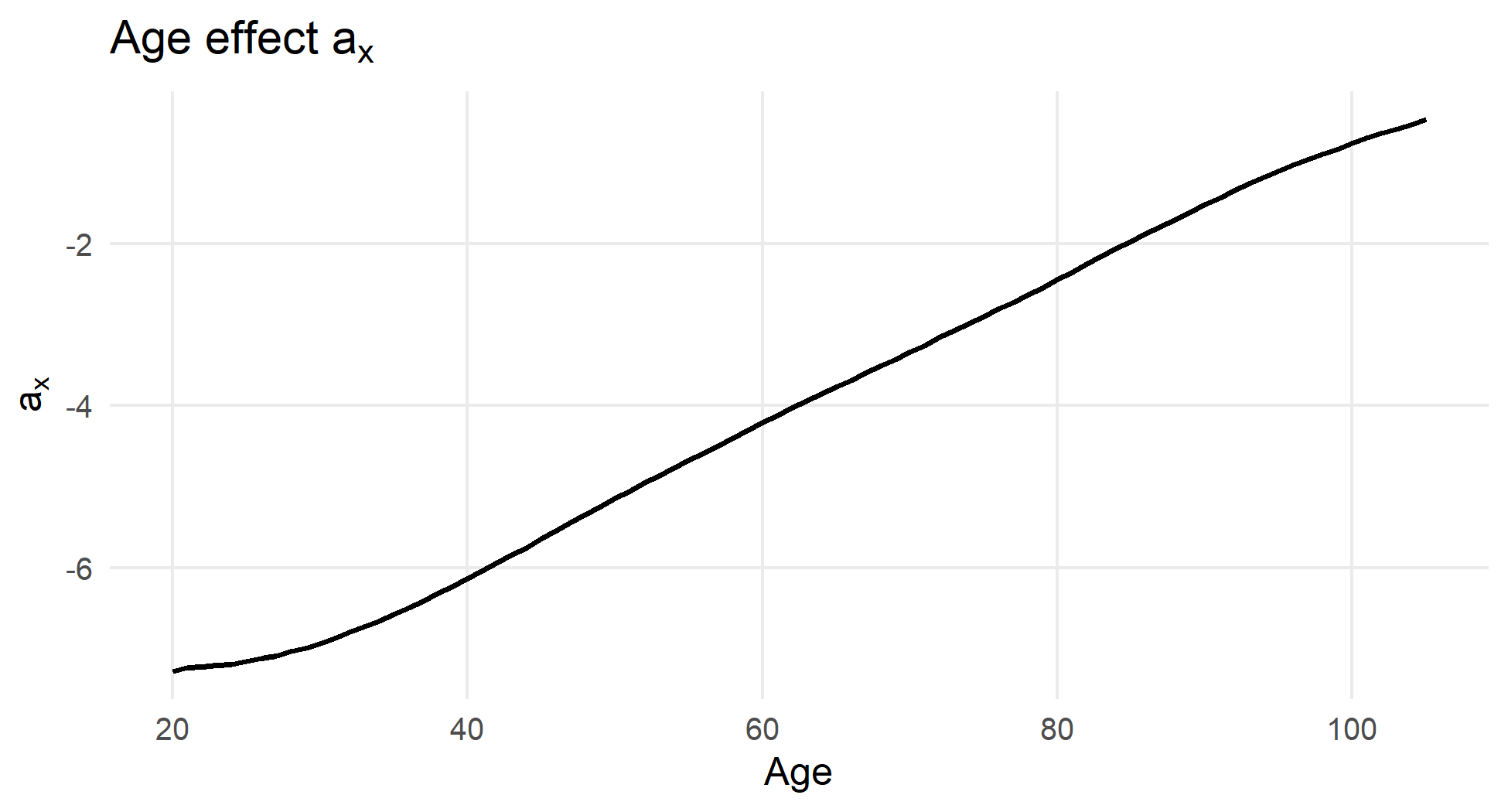}
    \caption{Age effect $\alpha_x$}
    \label{fig:alpha_x}
\end{subfigure}\hfill
\begin{subfigure}{0.48\textwidth}
    \centering
    \includegraphics[width=\linewidth]{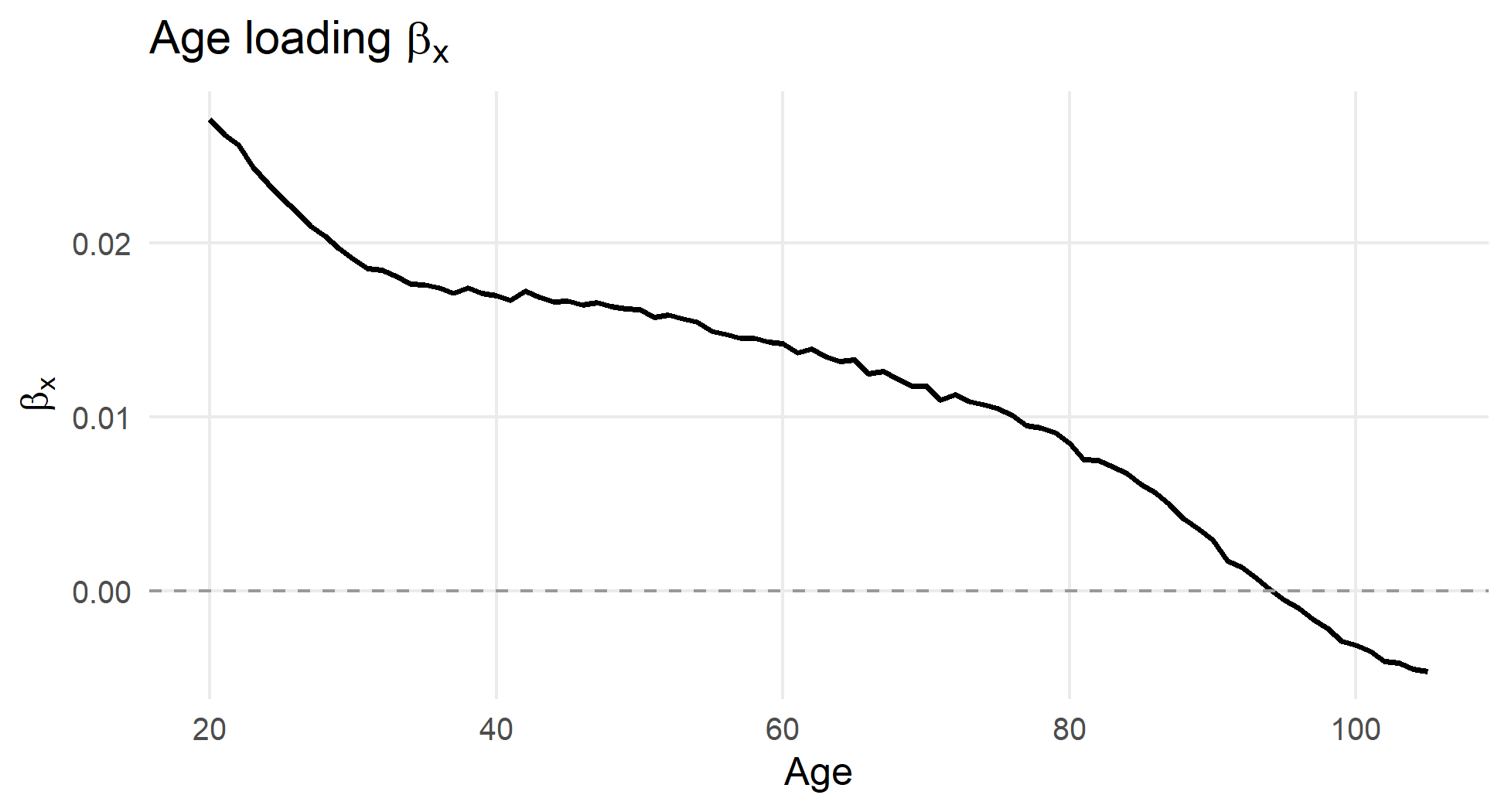}
    \caption{Age loading $\beta_x$}
    \label{fig:beta_x}
\end{subfigure}
\begin{subfigure}{0.48\textwidth}
    \centering
    \includegraphics[width=\linewidth]{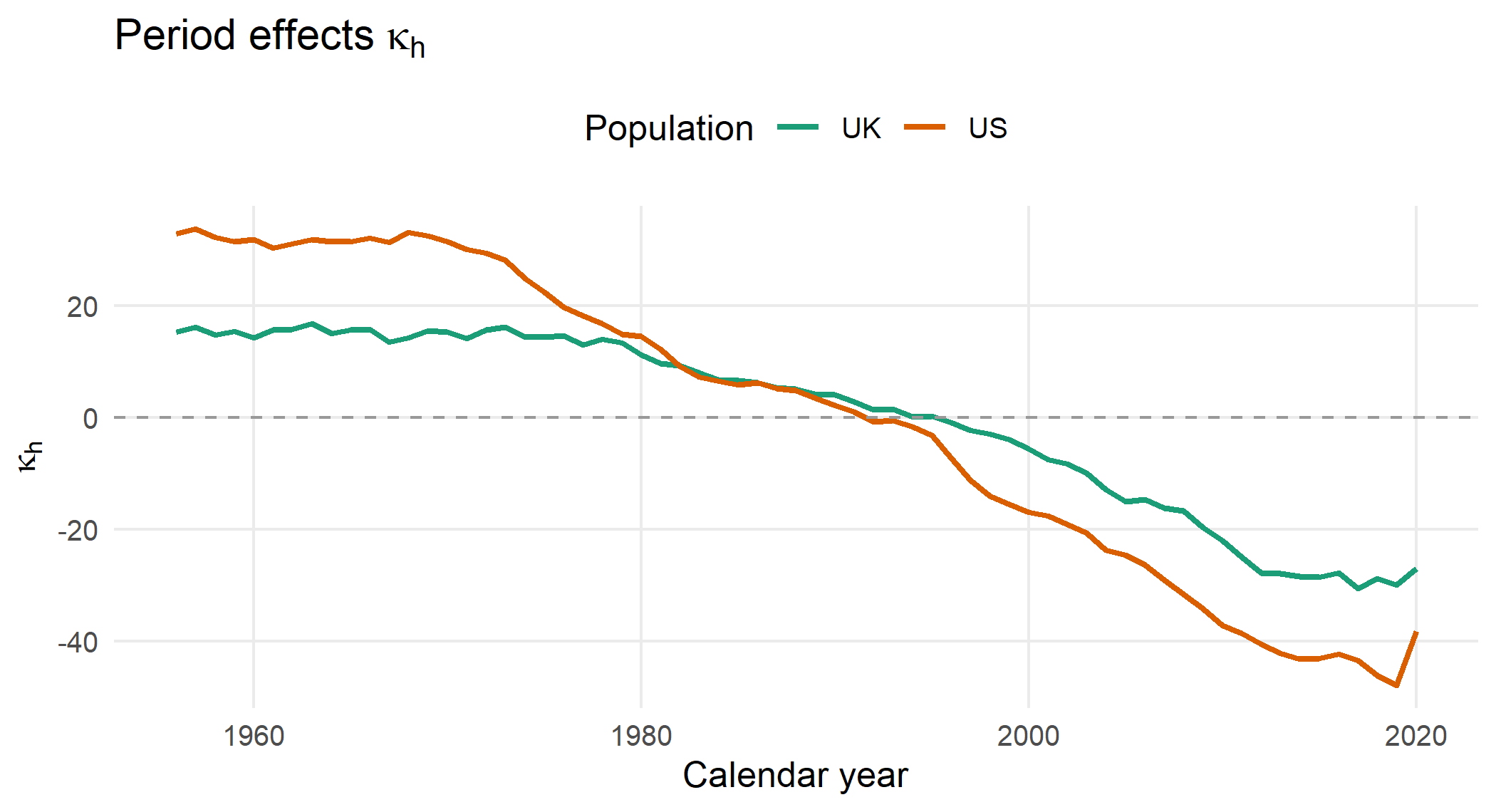}
    \caption{Period effects $\kappa_h$}
    \label{fig:kappa}
\end{subfigure}\hfill
\begin{subfigure}{0.48\textwidth}
    \centering
    \includegraphics[width=\linewidth]{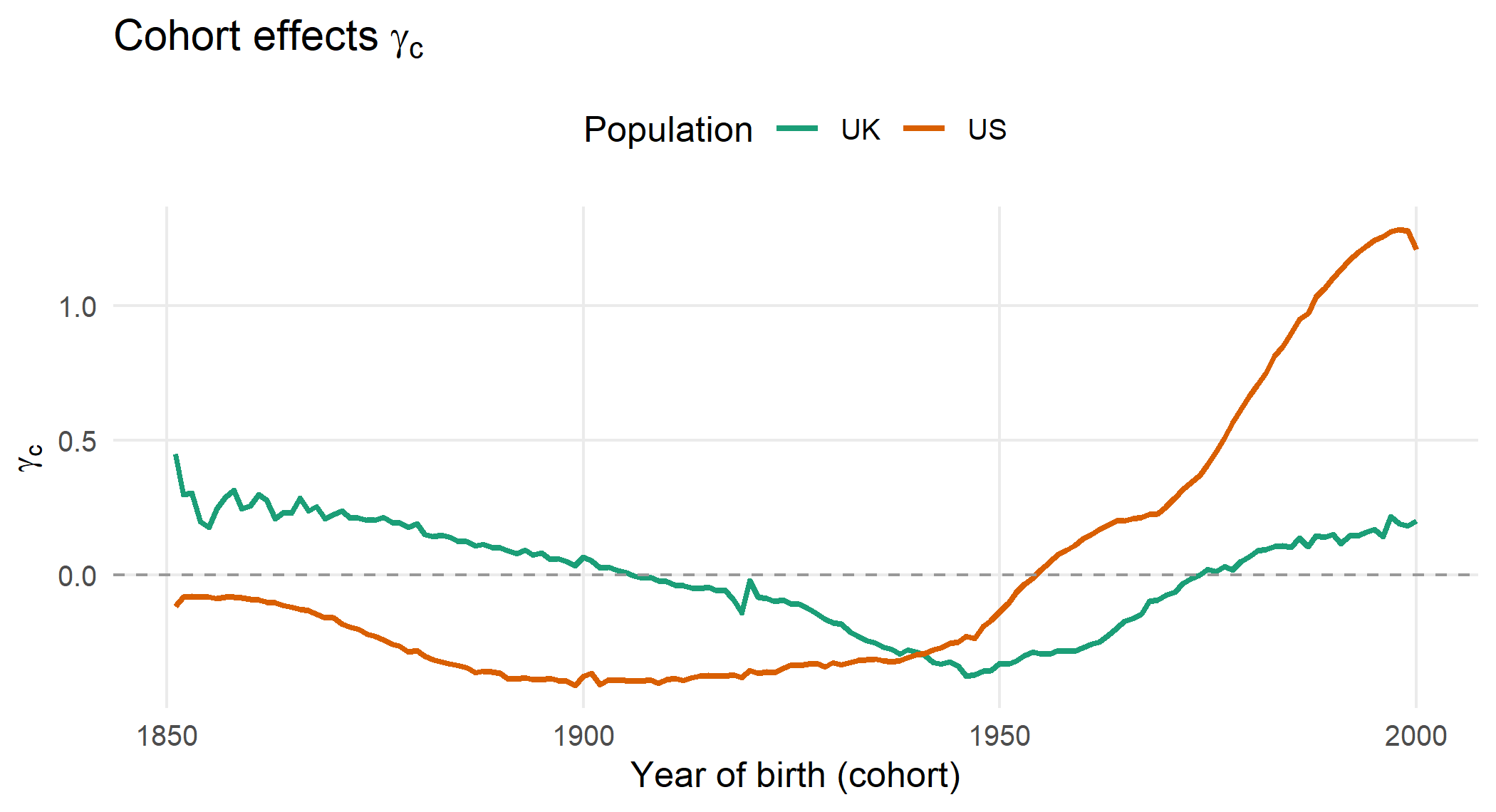}
    \caption{Cohort effects $\gamma_c$}
    \label{fig:gamma}
\end{subfigure}
\caption{Estimated components of the two-population APC gravity model.
Subfigures (a) and (b) show the common age effects ($\alpha_x$, $\beta_x$),
while (c) and (d) depict the period and cohort factors ($\kappa_h$, $\gamma_c$)
for the UK and US populations.}
\label{fig:apc_components}
\end{figure}

\begin{table}[H]
\centering
\caption{Estimated period ($\boldsymbol{\kappa}$) dynamics parameters in the two-population gravity model}
\label{tab:grav_params_compact}
\begin{tabular}{l r @{\hspace{2em}} l r}
\toprule
\multicolumn{4}{c}{\textbf{Period ($\boldsymbol{\kappa}$) dynamics}} \\
\midrule
$\mu_{\kappa}^{(1)}$   & $-0.6606$ & $\mu_{\kappa}^{(2)}$   & $-1.1085$ \\
$\sigma_{\kappa}^{(1)}$& $1.2761$  & $\phi_{\kappa}$        & $0.0103$  \\
$\sigma_{\kappa,21}$   & $0.9532$  & $\sigma_{\kappa,22}$   & $1.5709$  \\
\bottomrule
\end{tabular}
\end{table}

\end{document}